\documentclass[11pt,a4paper]{article}
\usepackage[]{geometry}
\usepackage{sectsty}
\sectionfont{\normalsize\bfseries}
\subsectionfont{\normalsize\sf}
\usepackage[mathcal]{euscript}
\usepackage{bm,url}                     %AMS Packages
\usepackage{amsfonts,todonotes}
\usepackage{amssymb}
\usepackage{amsmath}
\usepackage{amsthm}
\usepackage{graphicx} 
\usepackage{subcaption}              %Graphics
\usepackage{natbib}                 %Citation style
\bibliographystyle{abbrvnat}
\setcitestyle{authoryear,open={(},close={)}}                 %Bibliography
\usepackage{colortbl}               %Tables
\usepackage{booktabs}
\usepackage[english]{babel}
\usepackage{amsmath}
\usepackage{amsfonts}
\usepackage{amssymb}
\usepackage{amsthm}
\usepackage{graphicx}
\usepackage{color}
\usepackage{array}
\usepackage{mathrsfs}
\usepackage{setspace}
\usepackage{hyperref}
%----------------------------------------------------------
\definecolor{violet}{rgb}{0.7,0,0.6}

\usepackage{amsfonts}
\usepackage{amssymb}
\usepackage{amsmath}
\usepackage{amsthm}
\newcommand{\E}{\mathcal{E}}

% Definiciones

\newtheorem{proposition}{{\sc\bf Proposition}}
\newtheorem{theorem}{{\sc\bf Theorem}}
\newtheorem{lemma}{{\sc\bf Lemma}}

\newtheorem{remark}{{\sc\bf Remark}}
\newtheorem{corollary}{{\sc\bf Corollary}}

\def\E{{\mathbb E}}

\def\I{\mathbb{I}}

\def\cals_+{{\cals_+}}

\def\calh{{\mathcal{H}}}

\def\cals{{\mathcal{S}}}

\newcommand{\var}{{\rm Var}}

\newcommand{\dd}{\mathrm{d}}
\newcommand{\wh}{\widehat}

\newcommand{\std}{\stackrel{\rm d}{\rightarrow}}

\newcommand{\argmin}{\operatornamewithlimits{argmin}}

\newcommand{\operK}{\mathcal{K}}

\allowdisplaybreaks %Varias páginas por ecuación
\definecolor{violet}{rgb}{0.7,0.2,0.6}

%%%%%%%%%%%%%%%%%%%%%%%%%%%%%%%%%%%%%%%%%%%%%%%%%%%%%%%%%%%%%%%%%%%%%%%
%%%%%%%%%%%%%%%%%%%%%%%%%%%%%%%%%%%%%%%%%%%%%%%%%%%%%%%%%%%%%%%%%%%%%%%
%%%%%%%%%%%%%%%%%%%%%%%%%%%%%%%%%%%%%%%%%%%%%%%%%%%%%%%%%%%%%%%%%%%%%%%
%%%%%%%%%%%%%%%%%%%%%%%%%%%%%%%%%%%%%%%%%%%%%%%%%%%%%%%%%%%%%%%%%%%%%%%
%%%%%%%%%%%%%%%%%%%%%%%%%%%%%%%%%%%%%%%%%%%%%%%%%%%%%%%%%%%%%%%%%%%%%%%
%%%%%%%%%%%%%%%%%%%%%%%%%%%%%%%%%%%%%%%%%%%%%%%%%%%%%%%%%%%%%%%%%%%%%%%
%%%%%%%%%%%%%%%%%%%%%%%%%%%%%%%%%%%%%%%%%%%%%%%%%%%%%%%%%%%%%%%%%%%%%%%
%%%%%%%%%%%%%%%%%%%%%%%%%%%%%%%%%%%%%%%%%%%%%%%%%%%%%%%%%%%%%%%%%%%%%%%

\begin{document}

\begin{center}
\large \bf  On Mahalanobis distance in functional settings
\end{center}
\normalsize

\begin{center}
  Jos\'e R. Berrendero, Beatriz Bueno-Larraz, Antonio Cuevas \\
  Departamento de Matem\'aticas\\
  Universidad Aut\'onoma de Madrid, Spain
\end{center}

\begin{abstract}
\footnotesize {  Mahalanobis distance is a classical tool in multivariate analysis. We suggest here an extension of this concept to the case of functional data. More precisely, the proposed definition concerns those statistical problems where the sample data are real functions defined on a compact interval of the real line. The obvious difficulty for such a functional extension is the non-invertibility of the covariance operator in infinite-dimensional cases. Unlike other recent proposals, our definition is suggested and motivated in terms of the Reproducing Kernel Hilbert Space (RKHS) associated with the stochastic process that generates the data.  The proposed distance is a true metric; it depends  on a unique real smoothing parameter which is fully motivated in RKHS terms. Moreover, it shares some properties of its finite dimensional counterpart: it is invariant under  isometries, it can be consistently estimated from the data and its sampling distribution is known under Gaussian models.  An empirical study for two statistical applications, outliers detection and binary classification, is included. The obtained results are quite competitive when compared to other recent proposals of the literature. }
\end{abstract}

\small \noindent {\bf Keywords:}  Functional data, Mahalanobis distance, reproducing kernel Hilbert spaces,  kernel methods in statistics.

\normalsize

%%%%%%%%%%%%%%%%%%%%%%%%%%%%%%%%%%%%%%%%%%%%%%%%%%%%%%%%%%%%%%%%%%%%%%%
%%%%%%%%%%%%%%%%%%%%%%%%%%%%%%%%%%%%%%%%%%%%%%%%%%%%%%%%%%%%%%%%%%%%%%%
%%%%%%%%%%%%%% 				Introduction			  %%%%%%%%%%%%%%%%%%
%%%%%%%%%%%%%%%%%%%%%%%%%%%%%%%%%%%%%%%%%%%%%%%%%%%%%%%%%%%%%%%%%%%%%%%
%%%%%%%%%%%%%%%%%%%%%%%%%%%%%%%%%%%%%%%%%%%%%%%%%%%%%%%%%%%%%%%%%%%%%%%

\section{Introduction}\label{sec:intro}

%%%%%%%%%%%%%%%%%%%%%%%%%%%%%%%%%%%%%%%%%%%%%%%%%%%%%%%%%%%%%%%%%%%%%%%
%%%%%%%%%%%%%% 			    The problem 			   %%%%%%%%%%%%%%%%%%
%%%%%%%%%%%%%%%%%%%%%%%%%%%%%%%%%%%%%%%%%%%%%%%%%%%%%%%%%%%%%%%%%%%%%%%

\noindent \textit{The classical (finite-dimensional) Mahalanobis distance and its applications}

Let $X$ be a random variable taking values in ${\mathbb R}^d$ with non-singular covariance matrix $\Sigma$. In many practical situations  it is required to measure  the distance between two points $x_1,x_2\in{\mathbb R}^d$ when considered as two possible observations drawn from $X$. Clearly, the usual (square) Euclidean distance $\Vert x_1-x_2\Vert^2=(x_1-x_2)'(x_1-x_2):=\langle x_1-x_2,x_1-x_2\rangle$ is not a suitable choice  since it disregards the standard deviations and the covariances of the components of $x_i$ (given a column vector $x\in {\mathbb R}^d$ we denote by $x'$ the transpose of $x$). Instead, the most popular alternative is perhaps the classical Mahalanobis 
distance, $M(x_1,x_2)$, defined as 

\begin{equation}\label{eq:MahMult}
M(x_1,x_2) \ = \ \left((x_1-x_2)'\Sigma^{-1}(x_1-x_2)\right)^{1/2}.
\end{equation}

Very often the interest is focused on studying ``how extreme'' a point $x$ is within the distribution of $X$; this is typically evaluated in terms of $M(x,m)$, where $m$ stands for the vector of means of $X$. 

This distance is named after the Indian statistician P. C. Mahalanobis (1893-1972) who first proposed and analyzed this concept \citep{mahalanobis1936} in the setting of Gaussian distributions. Nowadays, some popular applications of the Mahalanobis distance are: supervised classification, outlier detection (\cite{rou90} and \cite{pen96}), multivariate depth measures (\cite{zuo00}), hypothesis testing (through Hotelling's statistic, \citet[Ch. 5]{ren12}) or goodness of fit (\cite{mar75}). This list of references is far from exhaustive.

\

\noindent  \textit{On the difficulties of defining a Mahalanobis-type distance for functional data} 

Our framework here is Functional Data Analysis (FDA); see, e.g., \cite{cuevas2014} for an overview. In other words, we deal with statistical problems involving functional data. Thus our sample is made of trajectories $X_1(t),\ldots,X_n(t)$ in $L^2[0,1]$ drawn from a second order stochastic process $X(t),\ t\in[0,1]$ with $m(t)={\mathbb E}(X(t))$.  The inner product and the norm in $L^2[0,1]$ will be denoted by $\langle\cdot,\cdot\rangle_2$ and $\Vert\cdot\Vert_2$, respectively (or simply $\langle\cdot,\cdot\rangle$ and $\Vert\cdot\Vert$ when there is no risk of confussion). 
We will henceforth assume that the covariance function $K(s,t)=\mbox{Cov}(X(s),X(t))$ is continuous and positive definite. The function $K$ defines a linear operator  $\operK:L^2[0,1]\rightarrow L^2[0,1]$, called covariance operator, given by
\begin{equation}\label{Eq:OpCov}
\operK f(t) \ = \ \int_0^1 K(t,s)f(s)\mathrm{d}s.
\end{equation}

The aim of this paper is to extend the notion of the multivariate (finite-dimensional) Mahalanobis distance \eqref{eq:MahMult} to the functional case when $x_1,x_2\in L^2[0,1]$. Clearly, in view of \eqref{eq:MahMult}, the inverse $\operK^{-1}$ of the functional operator $\operK$ should play some role in this extension if we want to keep a close analogy with the multivariate case. Unfortunately, such a direct approach utterly fails since, typically, $\operK$ is not invertible in general as an operator, in the sense that there is no linear continuous operator $\operK^{-1}$ such that  $\operK^{-1}\operK=\operK\operK^{-1}=  {\mathbb I}$,  the identity operator. 

To see the reason for this crucial difference between the finite and the infinite-dimensional cases, let us recall that some elementary linear algebra yields the following representations for $\Sigma x$ and $\Sigma^{-1}y$,
\begin{equation}\label{eq:inv_finito}
\Sigma x=\sum_{i=1}^d \lambda_i (e_i'x)e_i, \ \  \Sigma^{-1} y=\sum_{i=1}^d \frac{1}{\lambda_i} (e_i'x)e_i
\end{equation}
where $\lambda_1,\ldots, \lambda_d$ are the, strictly positive, eigenvalues of $\Sigma$ and $\{e_i,\ldots,e_d\}$ the corresponding orthonormal basis of eigenvectors. 

In the functional case, the classical Karhunen-Lo\`eve Theorem (see, e.g., \cite{ash2014topics}) provides $X(t)=\sum_jZ_je_j(t)$ (in $L^2$ uniformly on $t$) where the $\{e_j\}$ is the basis of orthonormal eigenfunctions of $\operK$ and the $Z_j=\langle X,e_j\rangle$ are uncorrelated random variables with $\mbox{Var}(Z_j)=\lambda_j$, the  eigenvalue of $\operK$ corresponding to $e_j$.  Then, we have
$$
\operK x=\operK \Big(\sum_{i=1}^\infty \langle x,e_i\rangle e_i\Big)
$$
 Note that the continuity of $K(s,t)$ implies  $\int_0^1\int_0^1K(s,t)^2\mathrm{d}s\mathrm{d}t < \infty$, thus $\operK$ is in fact a compact, Hilbert-Schmidt operator. In addition, it is easy to check $\int_0^1\int_0^1K(s,t)^2\mathrm{d}s\mathrm{d}t=\sum_{i=1}^\infty \lambda_i^2$ so that, in particular, the sequence $\{\lambda_i\}$ converges to zero very quickly. As a consequence, there is no hope of keeping a direct analogy with \eqref{eq:inv_finito}  since
\begin{equation}\label{eq:inv_infinito}
\operK^{-1} x=\sum_{i=1}^\infty \frac{1}{\lambda_i} \langle x,e_i\rangle e_i
\end{equation}
will not define in general a continuous operator with a finite norm. Still, for some particular functions $x=x(t)$  the series in \eqref{eq:inv_infinito} might be convergent. Hence we could use it formally to define the following template which, suitably modified, could lead to a general, valid definition for a Mahalanobis-type distance between two functions $x$ and $m$,
\begin{equation}\label{eq:template}
\widetilde M(x,m)=\left(\sum_{i=1}^\infty \frac{\langle x-m,e_i\rangle^2}{\lambda_i}\right)^{1/2},
\end{equation}
for all $x,m\in L^2[0,1]$ such that the series in \eqref{eq:template} is finite. We are especially concerned with the case where $x$ is a trajectory from a stochastic process $X(t)$ and $m$ is the corresponding mean function.  As we will see below, this entails some especial difficulties. 

\

\noindent \textit{The organization on this work}

 In the next section some theory of RKHS and its connection with the Mahalanobis distance  is introduced, together with the proposed definition. In Section \ref{sec:prop} some properties  of the proposed distance are presented and compared with those of  the original multivariate definition.  Then, a consistent estimator is analyzed  in Section \ref{sec:consist}.  Finally,  some  numerical outputs corresponding to different statistical applications  can be found in Section \ref{sec:appl}.

\section{A new definition of Mahalanobis distance for functional data}\label{sec:def}

Motivated by the previous considerations,  \cite{galeano2015} and \cite{ghi17} have suggested two functional Mahalanobis-type distances,  that we will comment at the end of this section. These proposals are natural extensions to the functional case of the multivariate notion \eqref{eq:MahMult}.   Moreover, as suggested by the practical examples considered in both works, these options performed quite well in many cases. 
However, we believe that there is still some room to further explore the subject for the reasons we will explain  below.

 In this section we will propose a further definition of a Mahalanobis-type distance, denoted $M_\alpha$. Its most relevant features can be summarized as follows:
 \begin{itemize}
 	\item $M_\alpha$ is also inspired in the natural template \eqref{eq:template}. The serious convergence issues appearing in \eqref{eq:template} are  solved by smoothing.
 	\item  $M_\alpha$  depends on a single, real, easy to interpret smoothing parameter $\alpha$ whose choice is not critical, in the sense that the distance has some stability  with respect to $\alpha$.
 	Hence, it is possible to think of a cross-validation or bootstrap-based choice of $\alpha$. In particular, no auxiliary weight function is involved in the definition.
 	\item  $M_\alpha(x,m)$ is a true metric which is defined for any given pair $x, m$ of functions in $L^2[0,1]$.
 	It shares some  invariance properties with the finite-dimensional counterpart \eqref{eq:MahMult}. 
 	\item If $m(t) = {\mathbb E}X(t)$, the distribution of $M_\alpha(X,m)$ is explicitly known  for Gaussian processes. In particular, ${\mathbb E}(M_\alpha^2(X,m))$ and $\mbox{Var}(M_\alpha^2(X,m))$ have explicit, relatively simple expressions. 
 \end{itemize}
 
\

The main contribution of this paper is to show that the theory of Reproducing Kernel Hilbert Spaces (RKHS) provides a natural and useful framework in order to propose an extension of the Mahalanobis distance to the functional setting, satisfying the above mentioned properties. So we next give, for the sake of completeness, a very short overview of the RKHS theory, just focused on the features we will use here.  We refer to \cite{berlinet2004}, Appendix F in \cite{janson1997}  and \cite{scholkopf2002}, for a more detailed treatment of the subject. 

\subsection{RKHS's and the Mahalanobis distance}

The starting element in the construction of an RKHS space of real functions in $[0,1]$ is a positive semidefinite function $K(s,t)$, $s,t\in[0,1]$. For our purposes, $K$ will be the continuous positive definite covariance function of the process $X(t)$ that generates our functional data.

Let us first consider the following auxiliary space $\mathcal{H}_0(K)$ of functions generated by~$K$,
\begin{equation}\label{Eq:DefH0}
\mathcal{H}_0(K) := \{ f:[0,1]\rightarrow {\mathbb R} : \ f(\cdot) = \sum_{i=1}^n a_i K(t_i,\cdot), \ a_i\in\mathbb{R}, \ t_i\in[0,1], \ n\in\mathbb{N} \}.
\end{equation}
This is a pre-Hilbert space when endowed with the inner product
\begin{equation}\label{Eq:ProdH0}
\langle f, g \rangle_K =\sum_{i,j} \alpha_i\beta_j K(t_i,s_j),
\end{equation}
where $f(\cdot) = \sum_i \alpha_i K(t_i,\cdot)$ and $g(\cdot) = \sum_j \beta_j K(s_j,\cdot)$. Note that, as $K$ is assumed to be strictly positive definite, the elements of $\mathcal{H}_0(K)$ have a unique representation in terms of $K$.

Now, the RKHS associated with $K$ is just defined as the completion $\mathcal{H}(K)$ of $\mathcal{H}_0(K)$. More precisely, the RKHS is the set of functions $f:[0,1]\to\mathbb{R}$ that are the $t$-pointwise limit of some Cauchy sequence in $\mathcal{H}_0(K)$ (see \cite{berlinet2004}, p.~18). The corresponding inner product in $\mathcal{H}(K)$ is also denoted $\langle\cdot,\cdot\rangle_K$. 

The term ``reproducing'' in the name of these spaces is after the following ``reproducing property'',
$$
f(t) = \langle f, K(t,\cdot) \rangle_K, \mbox { for all } f\in\mathcal{H}(K), \ t\in[0,1].
$$

To see the connection with the Mahalanobis distance, let us consider a random vector $(X(t_1),\ldots,X(t_d))$, instead of the whole stochastic process $X(t)$, $t\in [0,1]$. The covariance function $K(s,t)$ would be then replaced with the covariance matrix $\Sigma$  whose $(i,j)$-entry is $K(t_i,t_j)$.    From the Moore-Aronszajn Theorem we know that there exists a unique RKHS, $\mathcal{H}(\Sigma)$, in $\mathbb{R}^d$  whose reproducing kernel is $\Sigma$ see, \cite{hsing2015}, p.47--49 or \cite{berlinet2004}, p. 19.

   From the definition \eqref{Eq:DefH0} of $\mathcal{H}_0(\Sigma)$ it is clear that, in this case, this space is just  the image of the linear application defined by $\Sigma$, that is, it consists of the vectors that can be written as $x = \Sigma a$ for some $a\in\mathbb{R}^d$. Moreover, according to \eqref{Eq:ProdH0}, the inner product  between two elements $x=\Sigma a$ and $y=\Sigma b$ of this space is given by $\langle x, y \rangle_\Sigma = a'\Sigma b$. On the other hand, since $\mathcal{H}_0(\Sigma)$ is here a finite-dimensional space, it agrees with its completion $\mathcal{H}(\Sigma)$.
   
   If we assume that $\Sigma$ has full rank (if not, the generalized inverse should be used), this product can be rewritten as
 $$\langle x, y \rangle_\Sigma \ = \ a'\Sigma b \ = \ a' \Sigma \Sigma^{-1} \Sigma b \ = \ x'\Sigma^{-1}y.$$

Then, the squared distance between two vectors $x,y\in\mathcal{H}(\Sigma)$ associated  with this inner product can be expressed as 
\begin{equation}\label{eq:statement}
\Vert x - y \Vert_\Sigma^2 \ = \ \langle x-y, x-y \rangle_\Sigma \ = \ (x-y)'\Sigma^{-1}(x-y)=\sum_{i=1}^d \frac{((x-m)'e_i)^2}{\lambda_i},
\end{equation} where in the last equality we have used the second equation in \eqref{eq:inv_finito}.

%That is, if $x$ and $y$ are realizations of the same random vector with covariance matrix $\Sigma$, then $\Vert x - y \Vert_\Sigma$ is just the Mahalanobis distance between them.
We might summarize the above elementary discussion in the following statements:

\

\noindent \textit{(a) The RKHS distance $\Vert x-y\Vert_\Sigma$ in the RKHS associated with a finite-dimensional covariance operator, given by a positive definite matrix $\Sigma$, can be expressed as a simple sum involving the inverse eigenvalues of $\Sigma$, as shown in \eqref{eq:statement}.} 
	
\noindent \textit{(b) Such RKHS distance coincides with the corresponding Mahalanobis distance between $x$ and $y$.}

\

At this point it is interesting to note that the above statement (a) can be extended to the infinite-dimensional case, as pointed out in the following lemma.

\begin{lemma}\label{Lemma:NormRKHS}
	Let $\lambda_1\geq\lambda_2\geq\ldots$ be the positive eigenvalues of the integral operator associated with the kernel $K$. Let us denote by $e_i$ the corresponding unit eigenfunctions.
	For $x\in\mathcal{H}(K)$,
	\begin{equation}\label{Eq:NormRKHS}
	\Vert x \Vert_K^2 \ = \ \sum_{i=1}^\infty \frac{\langle x, e_i \rangle^2}{\lambda_i},
	\end{equation}
	and then the RKHS can be also rewritten as
	$$\mathcal{H}(K) = \{ \ x \in L^2[0,1] \ : \ \sum_{i=1}^\infty \frac{\langle x, e_i \rangle^2}{\lambda_i} < \infty \ \}.$$
	In particular, the functions $\{\sqrt{\lambda_i}e_i\}$ are an orthonormal basis for $\mathcal{H}(K)$.
\end{lemma}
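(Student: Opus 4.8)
The plan is to exhibit $\{\phi_i:=\sqrt{\lambda_i}\,e_i\}$ as an orthonormal basis of $\mathcal{H}(K)$ and then to read off both the norm formula and the set description from Parseval's identity. Throughout I will invoke Mercer's theorem, which applies because $K$ is continuous and positive definite: it yields $K(s,t)=\sum_i\lambda_i e_i(s)e_i(t)$ with the series converging uniformly, and in particular $K(t,t)=\sum_i\lambda_i e_i(t)^2<\infty$ for every $t$. I will also use repeatedly that point evaluation is continuous, so that functions in $\mathcal{H}(K)$ are continuous (hence in $L^2[0,1]$) by the continuity of $K$.

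The technical heart of the argument — and the step I expect to be the main obstacle — is the following identity: for every $g,h\in L^2[0,1]$ the function $\operK g(\cdot)=\int_0^1 K(\cdot,s)g(s)\,\mathrm{d}s$ belongs to $\mathcal{H}(K)$ and
\[
\langle \operK g,\operK h\rangle_K=\langle \operK g,h\rangle_2 .
\]
Morally $\operK g$ is the ``continuous'' analogue of the finite combinations $\sum_i a_iK(t_i,\cdot)$ that span $\mathcal{H}_0(K)$, and the identity is just \eqref{Eq:ProdH0} passed to the limit. To make this rigorous I would approximate the integral by Riemann sums $h_n=\sum_k K(s_k,\cdot)\,g(s_k)\Delta_k\in\mathcal{H}_0(K)$, use the uniform continuity of $K$ on the compact square $[0,1]^2$ to show that $(h_n)$ is Cauchy in $\|\cdot\|_K$ with pointwise limit $\operK g$ (so $\operK g\in\mathcal{H}(K)$), and finally pass to the limit in \eqref{Eq:ProdH0} via continuity of the inner product together with the reproducing property $\langle f,K(s,\cdot)\rangle_K=f(s)$. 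The interchange of limit and inner product is the delicate point; once it is secured, the companion formula $\langle f,\operK g\rangle_K=\langle f,g\rangle_2$ for any fixed $f\in\mathcal{H}(K)$ follows by the same reproducing-property limit.

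With this identity the orthonormality is immediate: since $\operK e_i=\lambda_i e_i$, each $e_i$ lies in $\mathcal{H}(K)$ and
\[
\langle e_i,e_j\rangle_K=\frac{1}{\lambda_i\lambda_j}\langle \operK e_i,\operK e_j\rangle_K=\frac{1}{\lambda_i\lambda_j}\langle \operK e_i,e_j\rangle_2=\frac{\delta_{ij}}{\lambda_i},
\]
so $\langle\phi_i,\phi_j\rangle_K=\delta_{ij}$. For completeness I would show that every generator $K(t,\cdot)$ of $\mathcal{H}_0(K)$ lies in the closed span of $\{\phi_i\}$: its coefficients are $\langle K(t,\cdot),\phi_i\rangle_K=\sqrt{\lambda_i}\,e_i(t)$ by the reproducing property, while $\sum_i\lambda_i e_i(t)^2=K(t,t)=\|K(t,\cdot)\|_K^2$ by Mercer, so the Bessel residual vanishes and $K(t,\cdot)=\sum_i\lambda_i e_i(t)e_i$ in $\|\cdot\|_K$. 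Since $\mathcal{H}_0(K)$ is dense in $\mathcal{H}(K)$, this makes $\{\phi_i\}$ an orthonormal basis.

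Finally I would compute the coefficients of an arbitrary $x\in\mathcal{H}(K)$: the companion identity gives $\langle x,e_i\rangle_K=\lambda_i^{-1}\langle x,e_i\rangle_2$, hence $\langle x,\phi_i\rangle_K=\lambda_i^{-1/2}\langle x,e_i\rangle_2$, and Parseval's identity yields $\|x\|_K^2=\sum_i\langle x,\phi_i\rangle_K^2=\sum_i\langle x,e_i\rangle_2^2/\lambda_i$, which is \eqref{Eq:NormRKHS}. For the set description, the estimate $\|f\|_2^2=\sum_i\langle f,e_i\rangle_2^2\le\lambda_1\sum_i\langle f,e_i\rangle_2^2/\lambda_i=\lambda_1\|f\|_K^2$ (using $\lambda_i\le\lambda_1$ and completeness of $\{e_i\}$ in $L^2$) refines the inclusion $\mathcal{H}(K)\subseteq L^2$ to a continuous embedding, so the forward inclusion is just the norm formula. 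Conversely, if $x\in L^2$ satisfies $\sum_i\langle x,e_i\rangle_2^2/\lambda_i<\infty$, the series $\sum_i\lambda_i^{-1/2}\langle x,e_i\rangle_2\,\phi_i$ converges in $\mathcal{H}(K)$; by the continuous embedding its limit also equals $\sum_i\langle x,e_i\rangle_2 e_i=x$ in $L^2$, where I use that $\{e_i\}$ is complete in $L^2$ because the strict positive definiteness of $K$ makes $\operK$ injective. Hence $x\in\mathcal{H}(K)$, which closes the characterization.
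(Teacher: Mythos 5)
Your proof is correct, but it takes a genuinely different route from the paper's. The paper does not argue from first principles at all: it simply quotes a representation theorem from \cite{amini2012} (displayed as \eqref{Eq:Amini}), which already exhibits $\{\sqrt{\lambda_i}e_i\}$ as an orthonormal basis of $\mathcal{H}(K)$ with inner product $\sum_i a_ib_i$, and then rewrites an arbitrary $x=\sum_i \langle x,e_i\rangle e_i$ in that basis to read off \eqref{Eq:NormRKHS}. You instead prove essentially that cited theorem yourself, starting from the definition of $\mathcal{H}(K)$ as the completion of $\mathcal{H}_0(K)$: Mercer's theorem, the intertwining identity $\langle \operK g,\operK h\rangle_K=\langle \operK g,h\rangle_2$ (i.e.\ $\operK$ acts as the adjoint of the inclusion $\mathcal{H}(K)\hookrightarrow L^2$), the Bessel-equality argument showing each generator $K(t,\cdot)$ lies in the closed span of the $\phi_i$, and the continuous embedding $\|\cdot\|_2\leq \sqrt{\lambda_1}\,\|\cdot\|_K$ to identify limits in the converse inclusion. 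What the paper's approach buys is brevity, at the cost of deferring all the substance to the literature; what yours buys is a self-contained argument that makes visible exactly where each hypothesis enters --- continuity of $K$ for Mercer and for the reproducing-property estimates, and strict positive definiteness (injectivity of $\operK$, hence completeness of $\{e_i\}$ in $L^2$) for the identification $y=x$ in the converse inclusion, a point the paper leaves implicit. One caveat: you state the key identity for arbitrary $g,h\in L^2[0,1]$ and propose to prove it by Riemann sums, but point values $g(s_k)$ are meaningless for an $L^2$ equivalence class, so the Riemann-sum argument only makes sense for continuous $g,h$; either restrict the identity to continuous functions (which suffices, since you only ever apply it with $g,h$ equal to eigenfunctions $e_i=\lambda_i^{-1}\operK e_i$, which are continuous) or extend by density via $\|\operK(g-g_n)\|_K^2=\langle \operK(g-g_n),g-g_n\rangle_2\leq \|\operK\|_{\mathcal{L}}\|g-g_n\|_2^2$. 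This is a correctable overstatement, not a gap in the logic you actually use.
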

\begin{proof}
	This result is just a rewording of the following theorem, whose proof can be found in \cite{amini2012}:

\
	
\noindent	\it Theorem.- 
		 Under the indicated conditions, the RKHS associated with $K$ can be written 
		\begin{equation}\label{Eq:Amini}
		\mathcal{H}(K) = \{ x \in L^2[0,1] \ : \ x  =\sum_{i=1}^\infty a_i \sqrt{\lambda_i}e_i, \ \text{for} \ \sum_{i=1}^\infty a_i^2 < \infty\},
		\end{equation}
		where the convergence of the series is in $L^2[0,1]$. This space is endowed with the inner product $\langle x, y \rangle_K \ = \ \sum_i a_i b_i,$
		where $x = \sum_i a_i \sqrt{\lambda_i}e_i$ and $y= \sum_i b_i \sqrt{\lambda_i}e_i$.
	\rm

\	
	
The result follows by noting that
	for any $x\in L^2[0,1]$ we can write
	$$x \ = \ \sum_{i=1}^\infty \langle x, e_i\rangle e_i \ = \ \sum_{i=1}^\infty \frac{\langle x, e_i\rangle}{\sqrt{\lambda_i}} \sqrt{\lambda_i}e_i.$$

Then, if the coefficients $\langle x, e_i\rangle$ tend to zero fast enough so that $\sum_i \langle x, e_i\rangle^2 \lambda_i^{-1} < \infty$, we have $x\in {\mathcal H}(K)$ and we get the expression \eqref{Eq:NormRKHS} for $\Vert x\Vert_K^2$.  	
\end{proof}

This result sheds some light on the following crucial question: to what extent the formal expression \eqref{eq:template} can be used to give a general definition of the functional Mahalanobis distance? In other words, for which functions $x\in L^2[0,1]$ does the series in \eqref{eq:template} converge in $L^2$?
The answer is  clear in view of Lemma \ref{Lemma:NormRKHS}: \textbf{expression  \eqref{eq:template} is well defined if and only if $\boldsymbol{x\in{\mathcal H}(K)}$}. This amounts to ask for a strong, very specific, regularity condition on $x$. 

The bad news is that, as a consequence of a well-known result (see, e.g. \cite{lukic2001}), Cor. 7.1) if $X=X(t)$ is a Gaussian process with mean  and covariance functions $m$ and $K$, respectively, such that $m\in{\mathcal H}(K)$ and 
${\mathcal H}(K)$ is infinite-dimensional, then ${\mathbb P}(X(\cdot)\in {\mathcal H}(K))=0$, whenever the probability ${\mathbb P}$ is assumed to be complete. 

Hence, with probability one, expression \eqref{eq:template} \textbf{is not convergent for the trajectories drawn from the stochastic process $X$}.

%%%%%%%%%%%%%%%%%%%%%%%%%%%%%%%%%%%%%%%%%%%%%%%%%%%%%%%%%%%%%%%%%%%%%%%
%%%%%%%%%%%%%% 			 	Proposal				 %%%%%%%%%%%%%%%%%%
%%%%%%%%%%%%%%%%%%%%%%%%%%%%%%%%%%%%%%%%%%%%%%%%%%%%%%%%%%%%%%%%%%%%%%%

\subsection{The proposed definition}

In view of the discussion above (see statement (b) before Lemma \ref{Lemma:NormRKHS}), it might seem natural to define the (square) Mahalanobis functional distance between a trajectory $x=x(t)$ of the process $X(t)$ and a function $m\in L^2[0,1]$ by $M^2(x,m)=\Vert x-m\Vert_K^2$. However, this idea does not work since, as indicated above, the trajectories $x=x(t)$ of $X=X(t)$ do not belong to ${\mathcal H}(K)$ with probability one.  

 This observation suggest us the simple strategy we will follow here: given two functions $x,m\in L^2[0,1]$, just approximate them by two other functions $x_\alpha, m_\alpha\in {\mathcal H}(K)$ and calculate the distance
$\Vert x_\alpha-m_\alpha\Vert_K$. It only remains to decide how to obtain the RKHS approximations $x_\alpha$ and $m_\alpha$. One could think of taking $x_\alpha$ as the ``closest'' function to $x$ in ${\mathcal H}(K)$ but this approach also fails since ${\mathcal H}(K)$ is dense in $L^2[0,1]$  whenever all $\lambda_i$ are strictly greater than zero (Remark 4.9 of \cite{cucker2007}). Thus,  every function $x\in L^2[0,1]$ can be arbitrarily well approximated by functions in ${\mathcal H}(K)$.

This leads us in a natural way to the following penalization approach.  Let us fix a penalization parameter $\alpha>0$. Given any $x\in L^2[0,1]$,  define 
\begin{equation}\label{eq:xalpha}
x_\alpha=\argmin_{f\in {\mathcal H}(K) }\Vert x-f\Vert_2^2+\alpha\Vert f\Vert_K^2.
\end{equation}
As we will see below, the ``penalized projection'' $x_\alpha$ is well-defined. In fact it admits a relatively simple closed form. Finally, the definition we propose for the functional $\alpha$-Mahalanobis distance is
\begin{equation}\label{eq:alpha_Mah}
M_\alpha(x,m)=\Vert x_\alpha-m_\alpha\Vert_K.
\end{equation}

As mentioned, given a realization $x$ of the stochastic process we have relatively simple expressions for both the smoothed trajectory $x_\alpha$ and the proposed distance. In the next result we summarize these expressions. 

\begin{proposition}\label{Prop:PropertiesProposal}
Given a second order process with covariance $K$, we denote as $\operK$ the integral covariance operator of Equation \eqref{Eq:OpCov} associated with $K$. Then the smoothed trajectories $x_\alpha$ defined in (\ref{eq:xalpha}) satisfy the following basic properties:

\begin{enumerate}
	\item[(a)] Let $\I$ be the identity operator on $L^2[0,1]$. Then, $\operK + \alpha \I$ is invertible and 
	\begin{equation}
	\label{Eq:MinEspec}
	x_\alpha = \left(\operK + \alpha \I\right)^{-1} \operK x  = \sum_{j=1}^\infty \frac{\lambda_j}{\lambda_j+\alpha} \langle x,e_j\rangle_{2}\, e_j,
\end{equation}
where $\lambda_j$, $j=1,2,\cdots$ are the eigenvalues of $\operK$ (which are strictly positive under our assumptions) and  $e_j$ stands for the unit eigenfunction of $\operK$ corresponding to $\lambda_j$.
	  
	\item[(b)] Denoting as $\operK^{1/2}$ the square root operator defined by $(\operK^{1/2})^2 = \operK$, the norm of $x_\alpha$ in $\calh(K)$ satisfies
	\begin{equation}\label{Eq:MahSpectra}
	\|x_\alpha\|^2_K = \sum_{j=1}^\infty \frac{\lambda_j}{(\lambda_j + \alpha)^2} \ \langle x,e_j \rangle_{2}^2=\|\operK^{1/2} (\operK + \alpha\I)^{-1} x\|_2^2,
	\end{equation} 
	and therefore, 
	$$M_\alpha(x,m)^2 = \sum_{j=1}^\infty \frac{\lambda_j}{(\lambda_j + \alpha)^2} \ \langle x-m,e_j \rangle_{2}^2.$$
\end{enumerate}
\end{proposition}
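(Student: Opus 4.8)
The plan is to solve the variational problem \eqref{eq:xalpha} by diagonalizing every quantity in the eigenbasis $\{e_j\}$ of $\operK$, which turns the infinite-dimensional optimization into a family of decoupled scalar problems, and then to recognize the resulting closed form as the stated operator expression.

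First I would dispose of the invertibility claim in (a). Since $\operK$ is compact, self-adjoint and positive with eigenvalues $\lambda_j>0$ accumulating only at $0$, the spectral theorem shows that $\operK+\alpha\I$ acts as multiplication by $\lambda_j+\alpha\ge\alpha>0$ on each $e_j$; hence it is bounded below by $\alpha$, and its inverse is the bounded operator sending $e_j$ to $(\lambda_j+\alpha)^{-1}e_j$, with operator norm at most $1/\alpha$. Next comes the core step. Writing $x_j=\langle x,e_j\rangle_2$ and expanding a candidate $f=\sum_j b_j e_j\in\mathcal H(K)$, Parseval gives $\|x-f\|_2^2=\sum_j (x_j-b_j)^2$, while Lemma \ref{Lemma:NormRKHS} gives $\|f\|_K^2=\sum_j b_j^2/\lambda_j$ (the very membership $f\in\mathcal H(K)$ being the requirement that this series converge). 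The objective therefore splits as $\sum_j\big[(x_j-b_j)^2+\alpha\, b_j^2/\lambda_j\big]$, a sum of independent, strictly convex scalar functions of the $b_j$. Minimizing each term separately yields $b_j=\tfrac{\lambda_j}{\lambda_j+\alpha}x_j$, and strict convexity guarantees this is the unique global minimizer.

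I would then check that this candidate genuinely lies in $\mathcal H(K)$ — that the minimum is attained inside the feasible set, not merely approached — by verifying $\sum_j b_j^2/\lambda_j=\sum_j\frac{\lambda_j}{(\lambda_j+\alpha)^2}x_j^2\le\frac{1}{4\alpha}\|x\|_2^2<\infty$, using that $t\mapsto t/(t+\alpha)^2$ is bounded by $1/(4\alpha)$ on $[0,\infty)$. This produces the series in \eqref{Eq:MinEspec}; applying the spectral description of $(\operK+\alpha\I)^{-1}$ from the first step to $\operK x=\sum_j\lambda_j x_j e_j$ then identifies it with $(\operK+\alpha\I)^{-1}\operK x$. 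For part (b) I would feed the explicit $x_\alpha$ into Lemma \ref{Lemma:NormRKHS}: since $\langle x_\alpha,e_j\rangle_2=\tfrac{\lambda_j}{\lambda_j+\alpha}x_j$, dividing by $\lambda_j$ and squaring gives $\|x_\alpha\|_K^2=\sum_j\frac{\lambda_j}{(\lambda_j+\alpha)^2}x_j^2$. The same series equals $\|\operK^{1/2}(\operK+\alpha\I)^{-1}x\|_2^2$, as seen by computing $\operK^{1/2}(\operK+\alpha\I)^{-1}x=\sum_j\frac{\sqrt{\lambda_j}}{\lambda_j+\alpha}x_j e_j$ spectrally and applying Parseval. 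Finally, because $x\mapsto x_\alpha$ is linear (evident from \eqref{Eq:MinEspec}), one has $x_\alpha-m_\alpha=(x-m)_\alpha$, so the formula for $M_\alpha(x,m)^2=\|x_\alpha-m_\alpha\|_K^2$ follows by reading the previous identity with $x-m$ in place of $x$.

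I expect the only genuinely delicate point to be the well-posedness of the diagonalized minimization: justifying that the objective truly separates over the eigencoordinates and, above all, that the termwise minimizer stays feasible (lies in $\mathcal H(K)$) so that the infimum is actually attained and not only approximated — a concern made real by the fact, noted just before the proposition, that $\mathcal H(K)$ is merely a dense, non-closed subspace of $L^2[0,1]$. The invertibility and the passage to the operator forms are, by contrast, routine spectral calculus.
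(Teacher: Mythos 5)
Your proof is correct, but it takes a genuinely different route from the paper on the core of part (a). The paper does not solve the variational problem \eqref{eq:xalpha} itself: it cites Theorem 8.1 of Gohberg et al.\ for the invertibility of $\operK+\alpha\I$, quotes Proposition 8.6 of Cucker--Zhou for the identity $x_\alpha=(\operK+\alpha\I)^{-1}\operK x$, and then obtains the series in \eqref{Eq:MinEspec} from Gohberg's resolvent formula $(\operK+\alpha\I)^{-1}=\alpha^{-1}(\I-\operK_1)$ combined with the spectral theorem. You instead prove everything from scratch: invertibility via the spectral lower bound $\lambda_j+\alpha\geq\alpha$, and the minimizer by diagonalizing the objective into decoupled strictly convex scalar problems $(x_j-b_j)^2+\alpha b_j^2/\lambda_j$, solved termwise to give $b_j=\lambda_j x_j/(\lambda_j+\alpha)$. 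Your explicit feasibility check, $\sum_j b_j^2/\lambda_j\leq (4\alpha)^{-1}\|x\|_2^2$ via the bound $t/(t+\alpha)^2\leq 1/(4\alpha)$, addresses precisely the attainment issue that the paper leaves buried inside the cited reference, and your closing remark correctly identifies it as the delicate point; this makes your argument self-contained and arguably more transparent, at the cost of length. Note that your use of Parseval for $\|x-f\|_2^2$ requires completeness of $\{e_j\}$ in $L^2[0,1]$, which holds here because $K$ is assumed strictly positive definite (so $\operK$ is injective); the paper relies on the same fact in Lemma \ref{Lemma:NormRKHS}, so this is not a gap, merely worth stating. Part (b) — Parseval in $\calh(K)$ via Lemma \ref{Lemma:NormRKHS}, the spectral computation of $\operK^{1/2}(\operK+\alpha\I)^{-1}x$, Parseval in $L^2$, and linearity of $x\mapsto x_\alpha$ for the final formula — coincides with the paper's argument.
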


\begin{proof} (a)  The fact that $\operK + \alpha \I$ is invertible is a consequence of Theorem 8.1 in \cite[p. 183]{Gohberg2013}. The expression for $x_\alpha$ follows straightforwardly from Proposition 8.6 of \cite[p.139]{cucker2007}.
Moreover, expression (8.4) in \cite[p. 184]{Gohberg2013} yields 
\begin{equation}
\label{eq:aux_lemma1}
 \left(\operK + \alpha \I\right)^{-1} y = \frac{1}{\alpha}(\I - \operK_1)y,
\end{equation}
where
\begin{equation}
\label{eq:aux_lemma2}
\operK_1 y = \sum_{j=1}^\infty \frac{\lambda_j}{\alpha + \lambda_j} \langle y, e_j\rangle_2\, e_j.
\end{equation}
Then, using the Spectral theorem for compact and self-adjoint operators (for instance Theorem 2 of Chapter 2 of \cite{cucker2001}) we get:
\[
x_\alpha =\left(\operK + \alpha \I\right)^{-1} \operK x = \frac{1}{\alpha} \sum_{j=1}^\infty \left(1-\frac{\lambda_j}{\alpha + \lambda_j}\right)\lambda_j \langle x, e_j\rangle_2\, e_j = \sum_{j=1}^\infty \frac{\lambda_j}{\alpha + \lambda_j} \langle x, e_j\rangle_2\, e_j.
\]

\

\noindent (b) In Lemma \ref{Lemma:NormRKHS} we have seen that $\sqrt{\lambda_j}e_j$  is an orthonormal basis of $\calh(K)$. Then \eqref{Eq:MinEspec} together with Parseval's identity (in $\calh(K)$) imply 
\[
\|x_\alpha\|^2_K = \sum_{j=1}^\infty \frac{\lambda_j}{(\lambda_j + \alpha)^2} \ \langle x,e_j \rangle_{2}^2.
\]
Moreover,  from the Spectral Theorem $\operK^{1/2}(x)=\sum_{i=1}^\infty \sqrt{\lambda_i} \langle x,e_i\rangle_2\,e_i$, then  using (\ref{eq:aux_lemma1}) and (\ref{eq:aux_lemma2}),
$
\operK^{1/2}(\operK + \alpha \I)^{-1} = \alpha^{-1}\operK^{1/2}(\I - \operK_1)$, and also
\[
\operK^{1/2}(\operK + \alpha \I)^{-1}x = \sum_{j=1}^\infty \frac{\sqrt{\lambda_j}}{\lambda_j+\alpha} \langle x,e_j\rangle_{2}\, e_j.
\]
Then, using again Parseval's identity (but now in $L^2[0,1]$) we get
\[
\|\operK^{1/2}(\operK + \alpha \I)^{-1}x\|^2_2 = \sum_{j=1}^\infty \frac{\lambda_j}{(\lambda_j + \alpha)^2} \ \langle x,e_j \rangle_{2}^2=\|x_\alpha\|^2_K.
\]
\end{proof}

\begin{corollary}\label{cor:metric}
	The expression $M_\alpha$ given in \eqref{eq:alpha_Mah} defines a metric in $L^2[0,1]$. 
	\begin{proof}
		This result is a direct consequence of Proposition \ref{Prop:PropertiesProposal}. Indeed, from expression \eqref{Eq:MinEspec}, the transformation
		$x\mapsto x_\alpha$ form $L^2[0,1]$ to ${\mathcal H}(K)$ in injective (since the coefficients $\langle x,e_i\rangle_K$ completely determine $x$). This, together with the fact that $\Vert\cdot\Vert_K$ is a norm, yields the result. 
	\end{proof}
\end{corollary}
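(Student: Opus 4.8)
The plan is to verify the four metric axioms for $M_\alpha$, exploiting two facts already available: the smoothing map $x\mapsto x_\alpha$ is \emph{linear} (immediate from the closed form \eqref{Eq:MinEspec}), and $\|\cdot\|_K$ is a genuine norm on $\calh(K)$. Linearity gives at once $x_\alpha-m_\alpha=(x-m)_\alpha$, so that $M_\alpha(x,m)=\|(x-m)_\alpha\|_K$. From this single representation, non-negativity, symmetry ($M_\alpha(x,m)=M_\alpha(m,x)$), and the triangle inequality are all inherited directly from the norm properties of $\|\cdot\|_K$; for the triangle inequality I would simply write $M_\alpha(x,z)=\|(x-m)_\alpha+(m-z)_\alpha\|_K\leq\|(x-m)_\alpha\|_K+\|(m-z)_\alpha\|_K$.

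The only axiom demanding a real argument is the identity of indiscernibles, i.e.\ that $M_\alpha(x,m)=0$ forces $x=m$. Since $\|\cdot\|_K$ is a norm, $M_\alpha(x,m)=0$ is equivalent to $(x-m)_\alpha=0$ in $\calh(K)$, and the whole question reduces to showing that the map $x\mapsto x_\alpha$ is \emph{injective}. Here I would use the spectral expression \eqref{Eq:MinEspec}: writing $h=x-m$, the condition $h_\alpha=0$ reads $\sum_{j}\frac{\lambda_j}{\lambda_j+\alpha}\langle h,e_j\rangle_2\,e_j=0$. Because the $\{e_j\}$ are orthonormal and each multiplier $\frac{\lambda_j}{\lambda_j+\alpha}$ is strictly positive (all $\lambda_j>0$ under the standing assumption that $K$ is strictly positive definite), this forces $\langle h,e_j\rangle_2=0$ for every $j$.

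To finish the injectivity argument I would invoke completeness of the eigenbasis: since $\operK$ is compact, self-adjoint and strictly positive definite, its null space is trivial, and hence $\{e_j\}$ is a complete orthonormal system in $L^2[0,1]$. Thus $\langle h,e_j\rangle_2=0$ for all $j$ yields $h=0$, that is $x=m$. I expect this completeness step to be the only delicate point, and it is precisely where strict positive definiteness of $K$ (rather than mere positive semidefiniteness) is indispensable: if $\operK$ had a nontrivial kernel, a nonzero function there would be annihilated by the smoothing map and $M_\alpha$ would collapse to a pseudometric. Everything else is a routine transfer of the norm axioms through a linear, injective map.
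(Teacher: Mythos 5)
Your proof is correct and follows essentially the same route as the paper's: both reduce the metric axioms to the injectivity of the smoothing map $x\mapsto x_\alpha$ (read off from the spectral form \eqref{Eq:MinEspec} with strictly positive multipliers $\lambda_j/(\lambda_j+\alpha)$ and completeness of the eigenbasis) combined with the fact that $\Vert\cdot\Vert_K$ is a norm on $\mathcal{H}(K)$. You merely spell out the details the paper compresses into one parenthetical, notably the linearity of the smoothing map and the role of strict positive definiteness in guaranteeing a complete eigensystem.
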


%%%%%%%%%%%%%%%%%%%%%%%%%%%%%%%%%%%%%%%%%%%%%%%%%%%%%%%%%%%%%%%%%%%%%%%
%%%%%%%%%%%%%% 		 	Previous proposals			 %%%%%%%%%%%%%%%%%%
%%%%%%%%%%%%%%%%%%%%%%%%%%%%%%%%%%%%%%%%%%%%%%%%%%%%%%%%%%%%%%%%%%%%%%%

\begin{remark}\label{meaning-x_alpha}
	The expression $x_\alpha= \left(\operK + \alpha \I\right)^{-1} \operK x$ obtained in the first part of Proposition \ref{Prop:PropertiesProposal} has an interesting intuitive meaning: the transformation $x\mapsto \operK x$ takes first the function $x\in L^2[0,1]$ to the space ${\mathcal H}(K)$, made of much nicer functions, with Fourier coefficients $\langle x,e_i\rangle_2$ converging quickly to zero, since we must have  $\sum_{i=1}^\infty \langle x,e_i\rangle_2^2/\lambda_i<\infty$;  see  \eqref{Eq:Amini}. Then, after this ``smoothing step'', we perform an ``approximation step'' by applying the inverse operator $\left(\operK + \alpha \I\right)^{-1}$, in order the get, as a final output, a function $x_\alpha$ that is both, close to $x$ and smoother than $x$.
	Note also that the operator $\left(\operK + \alpha \I\right)^{-1} \operK$ is compact. Thus, if we assume that the original trajectories are uniformly bounded in $L^2[0,1]$, the final result of applying on these trajectories the transformation $x\mapsto x_\alpha$ would be to take them to a  pre-compact set  of $L^2[0,1]$. This is very convenient from different points of view (beyond our specific needs here), in particular when one needs to find a convergent subsequence inside a given bounded sequence of $x_\alpha$'s.
\end{remark}
 \subsection{Some previous proposals}

Motivated by the heuristic spectral version \eqref{eq:template} of the Mahalanobis distance,  \cite{galeano2015} have proposed the following definition, that avoids the convergence problems of the series in \eqref{eq:template} (provided that $\lambda_i>0$)
at the expense of introducing a sort of smoothing parameter $k\in{\mathbb N}$,
\begin{equation}\label{eq:def_galeano}
d_{FM}^k(x,m)\ = \ \left(\sum_{i=1}^k \frac{\langle x-m,e_i\rangle^2}{\lambda_i}\right)^{1/2}.
\end{equation}
We keep the notation $d_{FM}^k$ used in \cite{galeano2015}. Let us note that $d_{FM}^k(x,m)$ is a semi-distance, since it lacks the identifiability condition $d_{FH}^k(x,m)=0 \Rightarrow x=m$. The applications of $d_{FM}^k$ considered by these authors focus mainly on supervised classification. While this proposal is quite simple and natural, it suffers from some insufficiencies when considered from the theoretical point of view. The most important one is the fact that the series \eqref{eq:def_galeano} is divergent, with probability one, whenever $x$ is a trajectory of a Gaussian process with mean function $m$ and covariance function $K$ (as we have just seen). So, $d^k_{FM}$ is defined in terms of the $k$-th partial sum of a divergent series. As a consequence, one may expect that the definition might be strongly influenced by the choice of $k$. As we will discuss below, in practice this effect is not noticed if $x$ is 
replaced with a smoothed trajectory but, in that case, the smoothing procedure should be incorporated to the definition.

Another recent proposal is due to \cite{ghi17}. The idea is also to modify the template \eqref{eq:template} to deal with the convergence issues. In this case, the suggested definition is
\begin{equation}\label{eq:def_ghi}
d_p(x,m)\ = \ \left(\int_0^\infty \sum_{i=1}^\infty \frac{\langle x-m,e_i\rangle^2}{e^{\lambda_ic}}g(c;p)dc\right)^{1/2},
\end{equation}
where $p>0$ and $g(c;p)$ is a weight function such that $g(0;p)=1$, $g$ is non-increasing and non-negative and $\int_0^\infty g(c;p)dc=p$. Moreover, for any $c>0$, $g(c;p)$ is assumed to be non-decreasing in $p$ with $\lim_{p\to\infty}g(c;p)=1$. This definition does not suffer from any problem derived from degeneracy but, still, it depends from two smoothing functions: the exponential in the denominator of \eqref{eq:def_ghi} and the weighting function $g(c;p)$. As pointed out also in \cite{ghi17}, a more convenient expression for \eqref{eq:def_ghi} is given by the following weighted version of the template, formal definition \eqref{eq:template},
\begin{equation}\label{eq:ghi_weighted}
d_p(x,m)\ = \ \left( \sum_{i=1}^\infty \frac{\langle x-m,e_i\rangle^2}{\lambda_i}h_i(p)\right)^{1/2},
\end{equation}
where $h_i(p)=\int_0^\infty \lambda_ie^{-\lambda_ic}g(c;p)dc$.

The applications of \eqref{eq:def_ghi} offered in \cite{ghi17} and \cite{ghi17b} deal with hypotheses testing for two-sample problems of type $H_0:\,m_1=m_2$.

%%%%%%%%%%%%%%%%%%%%%%%%%%%%%%%%%%%%%%%%%%%%%%%%%%%%%%%%%%%%%%%%%%%%%%%
%%%%%%%%%%%%%%%%%%%%%%%%%%%%%%%%%%%%%%%%%%%%%%%%%%%%%%%%%%%%%%%%%%%%%%%
%%%%%%%%%%%%%% 			 	Properties				 %%%%%%%%%%%%%%%%%%
%%%%%%%%%%%%%%%%%%%%%%%%%%%%%%%%%%%%%%%%%%%%%%%%%%%%%%%%%%%%%%%%%%%%%%%
%%%%%%%%%%%%%%%%%%%%%%%%%%%%%%%%%%%%%%%%%%%%%%%%%%%%%%%%%%%%%%%%%%%%%%%

\section{Some properties of the functional Mahalanobis distance}\label{sec:prop}

In this section we analyze in detail and prove some of the features of $M_\alpha$ we have anticipated above.  In what follows $X = X(t)$, with $t\in[0, 1]$ will stand for a second-order stochastic process with continuous trajectories and continuous mean and covariance functions, denoted by $m = m(t)$ and $K = K(s,t)$, respectively.

%%%%%%%%%%%%%%%%%%%%%%%%%%%%%%%%%%%%%%%%%%%%%%%%%%%%%%%%%%%%%%%%%%%%%%%
%%%%%%%%%% 					Invariance				   %%%%%%%%%%%%%%%%
%%%%%%%%%%%%%%%%%%%%%%%%%%%%%%%%%%%%%%%%%%%%%%%%%%%%%%%%%%%%%%%%%%%%%%%
\subsection{Invariance}\label{subsec:inv}

In the finite dimensional case, one appealing property of the Mahalanobis distance  is the fact that it does not change if we apply a non-singular linear transformation to the data. Then, the invariance for a large class of linear operators appears also as a desirable property  for any extension of the Mahalanobis distance to the functional case. Here, we will prove invariance with respect to operators preserving the norm. We recall that an operator $L$  is an isometry if it maps $L^2[0,1]$ to $L^2[0,1]$ and $\|f\|_2 = \|Lf\|_2$. In this case, it holds $L^*L=\I$, where $L^*$ stands for the adjoint of $L$. 

\begin{theorem}\label{teo:invariance}
Let $L$ be an isometry on $L^2[0,1]$. Then, $M_\alpha(x,m)=M_\alpha(Lx,Lm)$ for all $\alpha>0$, where $M_\alpha $ was defined in \eqref{eq:alpha_Mah}.
\end{theorem}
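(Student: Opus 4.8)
The plan is to reduce everything to the spectral formula for $M_\alpha$ established in Proposition~\ref{Prop:PropertiesProposal}(b),
$$M_\alpha(x,m)^2 = \sum_{j=1}^\infty \frac{\lambda_j}{(\lambda_j+\alpha)^2}\,\langle x-m,e_j\rangle_2^2,$$
bearing in mind that the $\lambda_j$ and the $e_j$ are the eigenvalues and eigenfunctions of the covariance operator of the process generating the data. Since applying $L$ to the trajectories produces the transformed process $LX$, the quantity $M_\alpha(Lx,Lm)$ must be read as computed with respect to the covariance operator of $LX$ (exactly as, in the finite-dimensional case, one passes from $\Sigma$ to $A\Sigma A'$ under a transformation $A$). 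Thus the first step is simply to record that the covariance operator of $LX$ is $L\operK L^*$, so that the theorem becomes a comparison between the spectral data of $\operK$ and those of $L\operK L^*$.

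The heart of the argument is to describe the spectrum of $L\operK L^*$ in terms of that of $\operK$, and here the isometry relation $L^*L=\I$ does all the work. I would check directly that for each $j$, $(L\operK L^*)(Le_j) = L\operK(L^*L)e_j = L\operK e_j = \lambda_j\,Le_j$, so $Le_j$ is an eigenfunction of $L\operK L^*$ with the same eigenvalue $\lambda_j$, and that $\langle Le_i,Le_j\rangle_2 = \langle e_i,L^*Le_j\rangle_2 = \delta_{ij}$, so the $\{Le_j\}$ are orthonormal. The one point genuinely requiring care is completeness: I must argue that $\{Le_j\}$ exhausts the strictly positive part of the spectrum of $L\operK L^*$, so that no eigenvalues are overlooked in the sum (this is the step that could silently fail if $L$ is a non-surjective isometry). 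This follows because $L^*$ annihilates $(\mathrm{range}\,L)^\perp$, since $\langle L^*v,u\rangle_2 = \langle v,Lu\rangle_2 = 0$ for every $v\perp\mathrm{range}\,L$; hence $L\operK L^*$ vanishes on $(\mathrm{range}\,L)^\perp$, while on $\mathrm{range}\,L$, where $L$ acts as a surjective isometry onto its range, the complete orthonormal system $\{e_j\}$ of $L^2[0,1]$ is carried to the complete orthonormal system $\{Le_j\}$ of $\mathrm{range}\,L$, which we have just seen diagonalizes $L\operK L^*$ with the positive eigenvalues $\lambda_j$.

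With the spectral data of $L\operK L^*$ in hand, the conclusion is a one-line computation. Writing out $M_\alpha(Lx,Lm)^2$ with eigenvalues $\lambda_j$ and eigenfunctions $Le_j$, and invoking the adjoint relation $\langle L(x-m),Le_j\rangle_2 = \langle x-m,L^*Le_j\rangle_2 = \langle x-m,e_j\rangle_2$, each summand collapses to the corresponding summand of $M_\alpha(x,m)^2$, giving term-by-term equality of the two series and hence $M_\alpha(Lx,Lm)=M_\alpha(x,m)$. I expect the only real obstacle to be the completeness bookkeeping of the second paragraph; once that is settled, the identity $L^*L=\I$ supplies all the remaining cancellation, playing precisely the role that $A'(A')^{-1}\Sigma^{-1}A^{-1}A=\Sigma^{-1}$ plays in the finite-dimensional computation that motivates the result.
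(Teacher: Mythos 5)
Your proof is correct and follows essentially the same route as the paper's: identify the covariance operator of $LX$ as $L\operK L^*$, check via $L^*L=\I$ that each $Le_j$ is a unit eigenfunction of $L\operK L^*$ with the same eigenvalue $\lambda_j$, and then collapse the spectral formula of Proposition~\ref{Prop:PropertiesProposal}(b) term by term using $\langle L(x-m),Le_j\rangle_2=\langle x-m,e_j\rangle_2$. If anything, you are more careful than the paper on one point---explicitly verifying that $\{Le_j\}$ exhausts the strictly positive spectrum of $L\operK L^*$, which matters precisely when $L$ is a non-surjective isometry and which the paper tacitly assumes---while, conversely, you record rather than prove the identity $\operK_L=L\operK L^*$, which the paper establishes by a short Fubini computation.
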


\begin{proof}

Let $\operK_L$ be the covariance operator of the process $LX$. 
The first step of the proof is to show that $\operK_L = L\operK L^*$. It is enough to prove that for all $f,g\in L^2[0,1]$, it holds $\langle \operK_L f, g\rangle_2 = \langle L\operK L^*f, g\rangle_2$. Observe that 
\[
\langle \operK_L f, g\rangle_2 =\int_0^1 \operK_L f(t) g(t)\mathrm{d}t = \int_0^1\int_0^1 \mathbb{E}[(LX(s)-Lm(s))(LX(t)-Lm(t))]f(s)g(t)\mathrm{d}s\mathrm{d}t.
\]
Then, using Fubini's theorem and the definition of the adjoint operator:
\[
\langle \operK_L f, g\rangle_2 = \mathbb{E}\big[\langle L(X-m),  f \rangle_2 \cdot \langle L(X-m),  g \rangle_2\big]=
\mathbb{E}\big[\langle X-m,  L^*f \rangle_2 \cdot \langle X-m,  L^*g \rangle_2\big].
\]
Analogously, we also have
\[
\langle L \operK L^*f, g\rangle_2 = \langle \operK L^*f, L^*g\rangle_2 = \mathbb{E}\big[\langle X-m,  L^*f \rangle_2 \cdot \langle X-m,  L^*g \rangle_2\big].
\]
From the last two equations we conclude $\operK_L = L\operK L^*$.

The second step of the proof is to observe that the eigenvalues $\lambda_j$ of $\operK_L$ are the same as those of $\operK$, and  the unit eigenfunction $u_j$ of $\operK_L$ for the eigenvalue $\lambda_j$ is given by $v_j = L e_j$, where $e_j$ is the unit eigenfunction corresponding to $\lambda_j$. Indeed, using $L^*L=\I$ we have
\[
\operK_L v_j =  L\operK L^* v_j =  L\operK L^*L e_j = \lambda_j Le_j = \lambda_j v_j,\ \ j=1,2,\ldots
\]
Then, by \eqref{Eq:MahSpectra} and using that $L$ is an isometry,
\begin{align*}
M_\alpha (Lx, Lm) =& \|(Lx-Lm)_\alpha\|_{\operK_L} = \sum_{j=1}^\infty \frac{\lambda_j}{(\lambda_j + \alpha)^2} \ \langle Lx-Lm,L e_j \rangle_{2}^2\\
 =& \sum_{j=1}^\infty \frac{\lambda_j}{(\lambda_j + \alpha)^2} \ \langle x-m,e_j \rangle_{2}^2 = M_\alpha (x, m).
\end{align*}
\end{proof}

The family of isometries on $L^2[0,1]$ contains some interesting examples. For instance, all the symmetries and translations are isometries, as well as the changes between orthonormal  bases. Thus, this distance does not depend on the basis on which the data are represented.

%%%%%%%%%%%%%%%%%%%%%%%%%%%%%%%%%%%%%%%%%%%%%%%%%%%%%%%%%%%%%%%%%%%%%%%
%%%%%%%%%% 					Distribution			   %%%%%%%%%%%%%%%%
%%%%%%%%%%%%%%%%%%%%%%%%%%%%%%%%%%%%%%%%%%%%%%%%%%%%%%%%%%%%%%%%%%%%%%%

\subsection{Distribution  for Gaussian processes}\label{subsec:dist}

We have mentioned in the introduction that the squared Mahalanobis distance to the mean for Gaussian data has a $\chi^2$ distribution with $d$ degrees of freedom, where $d$ is the dimension of the data. In the functional case, the distribution of $M_\alpha(X,m)^2$ for a Gaussian process $X$ equals that of an infinite linear combination of independent $\chi_1^2$ random variables. We prove this fact in the following result and its corollary, and also give explicit expressions for the expectation and the variance of $M_\alpha(X,m)^2$.

\begin{proposition}
\label{prop:dist}
Let $\{X_t:\, t\in [0,1]\}$ be an $L^2$ Gaussian process with mean $m$ and continuous  positive definite covariance function  $K$. Let $\lambda_1,\lambda_2,\cdots$ be the eigenvalues of $\operK$ and let $e_1,e_2,\ldots$ be the corresponding unit eigenfunctions.
\begin{enumerate}
\item[(a)] The squared Mahalanobis distance to the origin satisfies
\begin{equation}
\label{Eq:DistribMah}
M_\alpha(X,0)^2=\|X_\alpha\|^2_K = \sum_{j=1}^\infty \beta_jY_j,
\end{equation}
where $\beta_j=\lambda_j^2(\lambda_j+\alpha)^{-2}$ and  $Y_j$, $j=1,2,\cdots$,  are  non-central $\chi^2_1(\gamma_j)$ random variables with non-centrality parameter $\gamma_j=\mu_j^2/\lambda_j$, with $\mu_j := \langle m,e_j\rangle_2$.
\item[(b)] We have 
\[
\mathbb{E}\left(M_\alpha(X,0)^2\right) = \sum_{j=1}^\infty \frac{\lambda_j^2}{(\lambda_j + \alpha)^2}\left(1+\frac{\mu_j^2}{\lambda_j}\right),
\]
and 
\[
\var\left(M_\alpha(X,0)^2\right) = 2\sum_{j=1}^\infty \frac{\lambda_j^4}{(\lambda_j + \alpha)^4}\left(1+\frac{2\mu_j^2}{\lambda_j}\right).
\]
\end{enumerate}
\end{proposition}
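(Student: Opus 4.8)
The plan is to reduce the whole statement to the explicit spectral formula for $M_\alpha(X,0)^2$ already established in Proposition \ref{Prop:PropertiesProposal}(b), and then to identify the law of the random coefficients appearing there. Taking $m=0$ in that proposition gives
$$
M_\alpha(X,0)^2 = \sum_{j=1}^\infty \frac{\lambda_j}{(\lambda_j+\alpha)^2}\,\langle X, e_j\rangle_2^2 .
$$
Writing $W_j := \langle X, e_j\rangle_2$, the key distributional input is that, since $X$ is a Gaussian process and the $e_j$ are orthonormal, each finite collection of the $W_j$ is jointly Gaussian, with $\mathbb{E}(W_j)=\langle m, e_j\rangle_2 = \mu_j$ and $\cov(W_i, W_j) = \langle \operK e_i, e_j\rangle_2 = \lambda_i\delta_{ij}$. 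Joint Gaussianity together with pairwise uncorrelatedness forces the $W_j$ to be mutually independent, with $W_j \sim N(\mu_j, \lambda_j)$.

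Next I would normalise. Setting $\tilde W_j := W_j/\sqrt{\lambda_j} \sim N(\mu_j/\sqrt{\lambda_j}, 1)$, we have $W_j^2 = \lambda_j\tilde W_j^2$, so that
$$
\frac{\lambda_j}{(\lambda_j+\alpha)^2}\,W_j^2 = \frac{\lambda_j^2}{(\lambda_j+\alpha)^2}\,\tilde W_j^2 = \beta_j Y_j,
$$
where $Y_j := \tilde W_j^2$ is, by definition of the non-central chi-square, distributed as $\chi^2_1(\gamma_j)$ with $\gamma_j = (\mu_j/\sqrt{\lambda_j})^2 = \mu_j^2/\lambda_j$, and $\beta_j = \lambda_j^2/(\lambda_j+\alpha)^2$. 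Summing over $j$ yields the identity of part (a); note that this is in fact an almost-sure, pathwise identity, since each $Y_j$ is a fixed measurable function of $X$, and the $Y_j$ inherit the mutual independence of the $W_j$.

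For part (b) I would apply the standard moment formulas $\mathbb{E}[\chi^2_1(\gamma)] = 1+\gamma$ and $\var[\chi^2_1(\gamma)] = 2(1+2\gamma)$ termwise. The expectation follows by monotone convergence (all summands are nonnegative), giving $\mathbb{E}(M_\alpha(X,0)^2) = \sum_j \beta_j(1+\gamma_j)$; the variance follows from the independence of the $Y_j$, so that $\var\big(\sum_j \beta_j Y_j\big) = \sum_j \beta_j^2\,\var(Y_j) = 2\sum_j \beta_j^2(1+2\gamma_j)$. Substituting the values of $\beta_j$ and $\gamma_j$ and simplifying gives the two displayed formulas.

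The only point requiring genuine care — rather than a deep obstacle — is the legitimacy of these termwise manipulations of an infinite series: one must check that the series defining $M_\alpha(X,0)^2$ converges almost surely and in $L^1$, so that $\mathbb{E}$ and $\sum$ may be interchanged, and that the variance of the infinite sum equals the sum of the termwise variances. I would justify the first via monotone convergence together with finiteness of $\sum_j \beta_j(1+\gamma_j)$, and the second by checking $\sum_j \beta_j^2(1+2\gamma_j) < \infty$. Both summability conditions follow under the standing hypotheses from $\beta_j \le 1$, $\lambda_j \to 0$, and the trace-class property $\sum_j \lambda_j < \infty$ of the continuous covariance kernel (together with $m\in\mathcal{H}(K)$ controlling the $\gamma_j$ terms), which I would verify explicitly.
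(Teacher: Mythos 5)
Your part (a) is exactly the paper's argument: read off the spectral formula from Proposition \ref{Prop:PropertiesProposal}(b), observe that the scores $\langle X,e_j\rangle_2$ are independent Gaussians with mean $\mu_j$ and variance $\lambda_j$ (joint Gaussianity plus uncorrelatedness), and normalize to obtain non-central $\chi^2_1(\gamma_j)$ variables. In part (b) you take a slightly different route from the paper: the paper first establishes a.s.\ convergence of the partial sums via a sub-martingale argument and Doob's convergence theorem and then applies monotone convergence, whereas you use monotone convergence directly for the mean (legitimate, since the summands are nonnegative) and, for the variance, the standard fact that for independent summands with $\sum_j \beta_j^2\var(Y_j)<\infty$ the centered partial sums converge in $L^2$, so the variance of the limit is the sum of the variances. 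That is a perfectly sound and arguably more elementary justification.

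There is, however, one concrete error in your summability check: you invoke ``$m\in\mathcal{H}(K)$ controlling the $\gamma_j$ terms,'' but this is \emph{not} a hypothesis of the proposition, which only assumes $X$ is an $L^2$ Gaussian process, i.e.\ $m\in L^2[0,1]$. Requiring $m\in\mathcal{H}(K)$ would be a real restriction (a central theme of Section \ref{sec:def} is precisely how demanding membership in $\mathcal{H}(K)$ is). The point you are missing is that you never need $\sum_j\gamma_j<\infty$: the weights $\beta_j$ carry a factor $\lambda_j$ that cancels the $\lambda_j^{-1}$ inside $\gamma_j$. Concretely, with $\bar\lambda:=\sup_j\lambda_j<\infty$,
\[
\beta_j\gamma_j=\frac{\lambda_j\,\mu_j^2}{(\lambda_j+\alpha)^2}\le\frac{\bar\lambda}{\alpha^2}\,\mu_j^2,
\qquad
\beta_j^2\gamma_j=\frac{\lambda_j^3\,\mu_j^2}{(\lambda_j+\alpha)^4}\le\frac{\bar\lambda^3}{\alpha^4}\,\mu_j^2,
\]
so both series converge because $\sum_j\mu_j^2=\|m\|_2^2<\infty$, while $\sum_j\beta_j$ and $\sum_j\beta_j^2$ converge because $\beta_j\le(\bar\lambda/\alpha^2)\lambda_j$, $\beta_j^2\le(\bar\lambda^3/\alpha^4)\lambda_j$ and $\sum_j\lambda_j=\int_0^1 K(t,t)\,\mathrm{d}t<\infty$. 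This is exactly the bound the paper uses; with it your argument goes through under the stated hypotheses, whereas as written your proof establishes the proposition only under the extra (and unwanted) assumption $m\in\mathcal{H}(K)$.
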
 

\begin{proof}
(a) Using \eqref{Eq:MahSpectra}, $\|X_\alpha\|^2_K = \sum_{j=1}^\infty \beta_jY_j$, where $\beta_j=\lambda_j^2(\lambda_j+\alpha)^{-2}$ and  $Y_j=\lambda_j^{-1}\langle X,e_j\rangle_2^2$. Since the process is Gaussian the variables $\lambda_j^{-1/2}\langle X,e_j\rangle$  are independent with normal distribution, mean $\lambda_j^{-1/2}\mu_j$ and variance 1 (see \cite{ash2014topics}, p. 40). The result follows.

\

\noindent (b)  It is easy to see that the partial sums in \eqref{Eq:DistribMah} form a sub-martingale with respect to the natural filtration $\sigma(Y_1, \ldots, Y_N)$,
\[
\E \left( \sum_{j=1}^{N+1} \beta_j Y_j \, \Big\vert \, Y_1, \ldots, Y_N \right) =
\beta_{N+1}\mathbb{E}(Y_{N+1}) + \sum_{j=1}^N \beta_j Y_j  \geq \ \sum_{j=1}^N \beta_j Y_j.
\]
Moreover, if $\bar\lambda := \sup_j \lambda_j$, which is always finite,
\[
\sup_N \E \Big( \sum_{j=1}^{N+1} \beta_j Y_j \Big) = \sum_{j=1}^\infty \frac{\lambda_j(\lambda_j+\mu_j^2)}{(\lambda_j+\alpha)^2}\leq \frac{\bar{\lambda}}{\alpha^2}
 \Big(\sum_{j=1}^\infty \lambda_j + \sum_{j=1}^\infty \mu_j^2 \Big)<\infty,
\]
because $m\in L^2[0,1]$ and $\sum_{j=1}^\infty \lambda_j = \int_0^1 K(t,t)\mathrm{d}t<\infty$ (see e.g. \cite{cucker2001}, Corollary 3, p. 34). Now, Doob's convergence theorem implies $\sum_{j=1}^N \beta_jY_j \to \sum_{j=1}^\infty \beta_jY_j$ a.s. as $N\to\infty$, and Monotone Convergence theorem yields the expression for the expectation of $M_\alpha(X,0)^2$.

The proof for the variance is fairly similar. Using Jensen's inequality, we deduce
\[
\E \left[ \left(\sum_{j=1}^{N+1} \beta_j \left( Y_j - \E Y_j \right) \right)^2\, \Big\vert \, Y_1, \ldots, Y_N \right] \geq \left( \sum_{j=1}^N \beta_j \left( Y_j - \E Y_j \right) \right)^2.
\]
Moreover, since the variables $Y_j$ are independent:
\begin{align*}
\sup_N \E \left( \sum_{j=1}^N \beta_j \left( Y_j - \E Y_j \right) \right)^2 =&
\sum_{j=1}^\infty \beta_j^2 \mbox{Var}(Y_j) = 2\sum_{j=1}^\infty \frac{\lambda_j^3(\lambda_j + 2\mu_j^2)}{(\lambda_j + \alpha)^4} \\
\leq& \frac{2\bar{\lambda}^3}{\alpha^4}\Big(\sum_{j=1}^\infty \lambda_j + 2\sum_{j=1}^\infty \mu_j^2 \Big)<\infty.
\end{align*}
Then, $\left(\sum_{j=1}^{N+1} \beta_j \left( Y_j - \E Y_j \right) \right)^2\to \left(\sum_{j=1}^{\infty} \beta_j \left( Y_j - \E Y_j \right) \right)^2$ a.s., as $N\to\infty$, and using Monotone Convergence theorem,
\[
\var\left(M_\alpha(X,0)^2\right) = \lim_{N\to\infty}\mbox{Var}\Big(\sum_{j=1}^N \beta_jY_j\Big) = 2\sum_{j=1}^\infty \frac{\lambda_j^4}{(\lambda_j + \alpha)^4}\left(1+\frac{2\mu_j^2}{\lambda_j}\right).
\]
\end{proof}

When we compute the squared Mahalanobis distance to the mean the expressions above simplify because $\mu_j=0$ for each $j$, and then we have the following corollary.

\begin{corollary}
\label{cor:distribution}
Under the same assumptions of Proposition \ref{prop:dist}, $M_\alpha(X,m)^2= \sum_{j=1}^\infty \beta_jY_j$, where  $\beta_j=\lambda_j^2(\lambda_j+\alpha)^2$ and $Y_1, Y_2,\ldots$ are independent  $\chi^2_1$ random variables. Moreover, $\mathbb{E}\left(M_\alpha(X,m)^2\right) = \sum_{j=1}^\infty \lambda_j^2 (\lambda_j+\alpha)^{-2}$ and $\var\left(M_\alpha(X,m)^2\right) = 2\sum_{j=1}^\infty \lambda_j^4 (\lambda_j+\alpha)^{-4}$.
\end{corollary}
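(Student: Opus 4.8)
The plan is to derive the corollary as the specialization of Proposition \ref{prop:dist} to the distance computed to the mean rather than to the origin. The key observation is that $M_\alpha(X,m)^2 = \|(X-m)_\alpha\|_K^2$ by definition \eqref{eq:alpha_Mah}, and the centered process $\tilde X := X - m$ is again an $L^2$ Gaussian process, now with mean the zero function and with exactly the same covariance function $K$ (subtracting a deterministic function leaves all covariances unchanged). In particular $\tilde X$ and $X$ share the covariance operator $\operK$, hence the same eigenvalues $\lambda_j$ and unit eigenfunctions $e_j$.

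First I would apply Proposition \ref{prop:dist}(a) to $\tilde X$ in place of $X$. Since the mean of $\tilde X$ is the zero function, its Fourier coefficients $\mu_j = \langle 0, e_j\rangle_2$ all vanish, so every non-centrality parameter $\gamma_j = \mu_j^2/\lambda_j$ equals zero. Consequently the variables $Y_j = \lambda_j^{-1}\langle \tilde X, e_j\rangle_2^2$ are \emph{central} $\chi^2_1$ random variables, mutually independent because the process is Gaussian. This yields the representation $M_\alpha(X,m)^2 = \sum_{j=1}^\infty \beta_j Y_j$ with $\beta_j = \lambda_j^2(\lambda_j+\alpha)^{-2}$, as claimed.

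For the moments I would simply set $\mu_j = 0$ in the expressions of Proposition \ref{prop:dist}(b). Using $\mathbb{E}(\chi^2_1)=1$ and $\var(\chi^2_1)=2$, the expectation collapses to $\sum_j \lambda_j^2(\lambda_j+\alpha)^{-2}$ and the variance to $2\sum_j \lambda_j^4(\lambda_j+\alpha)^{-4}$. The sub-martingale, Doob, and monotone-convergence argument that justifies the termwise passage to the limit carries over verbatim; in fact it simplifies, since the terms involving $\mu_j^2$ drop out and the finiteness bounds reduce to $\bar\lambda\,\alpha^{-2}\sum_j\lambda_j<\infty$ and $2\bar\lambda^3\alpha^{-4}\sum_j\lambda_j<\infty$, both finite because $\sum_j\lambda_j = \int_0^1 K(t,t)\,\mathrm{d}t<\infty$.

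There is no genuine obstacle here: the only point that requires care is the very first step, namely checking that centering the process preserves the eigenstructure, so that Proposition \ref{prop:dist} applies with the same $\lambda_j$ and $e_j$. Once that is in place, the remainder is a direct substitution of $\mu_j=0$ into results already established.
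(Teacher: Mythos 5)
Your proof is correct and follows essentially the same route as the paper, which derives the corollary directly from Proposition \ref{prop:dist} by noting that centering the process ($\tilde X = X-m$) leaves the covariance structure, hence the $\lambda_j$ and $e_j$, unchanged, so that all $\mu_j$ vanish and the non-central $\chi^2_1(\gamma_j)$ variables become central ones. Note also that your reading $\beta_j=\lambda_j^2(\lambda_j+\alpha)^{-2}$ is the intended one: the exponent $(\lambda_j+\alpha)^{2}$ in the corollary's statement is a typo, as confirmed by the moment formulas and by Proposition \ref{prop:dist} itself.
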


%%%%%%%%%%%%%%%%%%%%%%%%%%%%%%%%%%%%%%%%%%%%%%%%%%%%%%%%%%%%%%%%%%%%%%%
%%%%%%%%%% 					 Alpha					   %%%%%%%%%%%%%%%%
%%%%%%%%%%%%%%%%%%%%%%%%%%%%%%%%%%%%%%%%%%%%%%%%%%%%%%%%%%%%%%%%%%%%%%%
\subsection{Stability  with respect to  $\alpha$}
\label{subsec:alpha}

Our definition of distance depends on a regularization parameter $\alpha>0$. In this subsection  we prove the continuity of $M_\alpha$ with respect to the tuning parameter $\alpha$. The proof of the main result requires the following auxiliary lemma, which has been adapted from Corollary 8.3 in \cite{Gohberg2013}, p. 71. Recall that given a bounded operator $A:\mathcal{H}\to\mathcal{H}$ on a Hilbert space $\mathcal{H}$ we can define the norm
\[
\|A\|_{\mathcal{L}} := \sup\{ \|Ax\|_{\mathcal{H}}:\, \|x\|_{\mathcal{H}}\leq 1\}. 
\]

\begin{lemma}
\label{lemma:aux}
Let $A_j:\mathcal{H}\to\mathcal{H}$, $j=1,2,\ldots$, be a sequence of bounded invertible operators on a Hilbert space $\mathcal{H}$ which converges in norm $\|\cdot\|_{\mathcal{L}}$ to another operator $A$, and such that $\sup_j \|A_j^{-1}\|_{\mathcal{L}}<\infty$. Then $A$ is also invertible, and  $\|A_j^{-1} - A^{-1}\|_{\mathcal{L}}\to 0$, as $j\to\infty$.  
\end{lemma}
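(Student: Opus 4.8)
The plan is to reduce the statement to a perturbation identity and then exploit the uniform bound on the inverses. First I would write, for each $j$, the algebraic factorization
\[
A_j^{-1} - A^{-1} = A_j^{-1}(A - A_j)A^{-1},
\]
which holds as soon as both inverses exist. Taking operator norms and using submultiplicativity gives
\[
\|A_j^{-1} - A^{-1}\|_{\mathcal{L}} \leq \|A_j^{-1}\|_{\mathcal{L}}\, \|A - A_j\|_{\mathcal{L}}\, \|A^{-1}\|_{\mathcal{L}}.
\]
Since $\sup_j \|A_j^{-1}\|_{\mathcal{L}} =: C < \infty$ by hypothesis and $\|A - A_j\|_{\mathcal{L}} \to 0$, the right-hand side tends to zero provided we already know that $A^{-1}$ exists and is bounded. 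So the real work is to establish invertibility of the limit operator $A$ first; once that is in hand, the norm convergence is immediate from the displayed inequality.

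To prove that $A$ is invertible I would argue directly that $A$ is bounded below and has dense range (or, more cleanly, construct its inverse as a limit). The cleanest route is to show that $A$ is bounded below: for any $x\in\mathcal{H}$,
\[
\|A_j x\|_{\mathcal{H}} = \|A_j A_j^{-1} A_j x\|_{\mathcal{H}} \geq \frac{\|A_j x\|_{\mathcal{H}}}{\cdots}
\]
is circular, so instead I would use that $\|A_j^{-1}\|_{\mathcal{L}}\leq C$ means $\|A_j y\|_{\mathcal{H}} \geq C^{-1}\|y\|_{\mathcal{H}}$ for all $y$ and all $j$. Passing to the limit $j\to\infty$ using $\|A_j x - Ax\|_{\mathcal{H}}\to 0$ gives $\|Ax\|_{\mathcal{H}} \geq C^{-1}\|x\|_{\mathcal{H}}$ for all $x$, so $A$ is injective and bounded below. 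The same lower bound applied to the adjoints $A_j^*$ (note $\|(A_j^*)^{-1}\|_{\mathcal{L}} = \|A_j^{-1}\|_{\mathcal{L}}$ and $A_j^* \to A^*$ in norm) shows $A^*$ is also bounded below, which forces $A$ to have dense range. A bounded-below operator with dense range on a Hilbert space is invertible with $\|A^{-1}\|_{\mathcal{L}}\leq C$, and this completes the invertibility step.

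The hard part is genuinely the invertibility of $A$; the final norm estimate is then a one-line consequence. The subtlety worth flagging is that merely being bounded below gives injectivity and closed range but not surjectivity, so one cannot skip the dense-range (equivalently, adjoint) argument: without it the limit could be a proper isometric embedding rather than an isomorphism. Since the lemma is stated as adapted from Corollary 8.3 of \cite{Gohberg2013}, I would alternatively just cite that reference and present the factorization identity as the operative computation, but the self-contained argument above is short enough to include in full. In the application that follows, the operators will be $A_j = \operK + \alpha_j\I$ with $\alpha_j\to\alpha$, so the uniform bound $\sup_j\|A_j^{-1}\|_{\mathcal{L}}\leq (\inf_j\alpha_j)^{-1}$ comes for free from positivity of $\operK$, which is presumably why the hypothesis is stated in this abstract form.
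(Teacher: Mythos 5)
Your proof is correct and complete. Note that the paper itself does not prove this lemma at all: it simply states it as ``adapted from Corollary 8.3 in \cite{Gohberg2013}, p.~71,'' so your self-contained argument is genuinely different in that it supplies the missing details. Each step checks out: the resolvent-type identity $A_j^{-1}-A^{-1}=A_j^{-1}(A-A_j)A^{-1}$ is valid once both inverses exist; the uniform bound $\|A_j^{-1}\|_{\mathcal{L}}\leq C$ gives $\|A_j y\|_{\mathcal{H}}\geq C^{-1}\|y\|_{\mathcal{H}}$, which passes to the limit to show $A$ is bounded below; the same argument for $A_j^*\to A^*$ (using $\|(A_j^*)^{-1}\|_{\mathcal{L}}=\|A_j^{-1}\|_{\mathcal{L}}$) shows $A^*$ is injective, hence $A$ has dense range; and bounded below plus dense range yields invertibility with $\|A^{-1}\|_{\mathcal{L}}\leq C$. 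You were also right to flag that boundedness below alone would not suffice --- that is the genuine subtlety. For comparison, the shorter route that is presumably behind the cited corollary avoids adjoints entirely: once $j$ is large enough that $\|A-A_j\|_{\mathcal{L}}<1/C$, write $A=A_j\left(\I+A_j^{-1}(A-A_j)\right)$ and invert the second factor by a Neumann series, since $\|A_j^{-1}(A-A_j)\|_{\mathcal{L}}\leq C\,\|A-A_j\|_{\mathcal{L}}<1$. That argument works in any Banach space and gives the same bound on $\|A^{-1}\|_{\mathcal{L}}$ in one step; your adjoint-based argument is specific to Hilbert spaces, which is all the lemma requires, and has the merit of making the injectivity/surjectivity dichotomy explicit.
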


We will apply the preceding lemma in the proof of the following result.

\begin{proposition}
\label{Prop:Cont}
Let $\alpha_j$ be a sequence of positive real numbers such that $\alpha_j\to\alpha>0$, as $j\to\infty$. Then, $\|X_{\alpha_j}\|_K \to \|X_{\alpha}\|_K$ a.s. as $j\to\infty$. 
\end{proposition}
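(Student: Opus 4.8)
The plan is to lean on the closed form $\|x_\alpha\|_K^2 = \|\operK^{1/2}(\operK + \alpha\I)^{-1} x\|_2^2$ established in Proposition \ref{Prop:PropertiesProposal}(b), and to reduce the assertion to norm convergence of the resolvent-type operators $(\operK + \alpha_j\I)^{-1}$, which is precisely what Lemma \ref{lemma:aux} is designed to deliver. Fix a realization $X\in L^2[0,1]$ (which occurs with probability one) and set $A_j = \operK + \alpha_j\I$ and $A = \operK + \alpha\I$. By Proposition \ref{Prop:PropertiesProposal}(a) every $A_j$ and $A$ is a bounded invertible operator, so part of the hypotheses of the lemma are already in place; it remains to verify the two quantitative conditions.

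Norm convergence $A_j\to A$ is immediate, since $A_j - A = (\alpha_j-\alpha)\I$ gives $\|A_j-A\|_{\mathcal{L}} = |\alpha_j - \alpha|\to 0$. For the uniform bound on the inverses I would use that $\operK$ is self-adjoint and positive, being a covariance operator, so the spectrum of $A_j = \operK + \alpha_j\I$ lies in $[\alpha_j,\infty)$ and hence $\|A_j^{-1}\|_{\mathcal{L}}\leq 1/\alpha_j$. Since $\alpha_j\to\alpha>0$, there is $j_0$ with $\alpha_j\geq \alpha/2$ for $j\geq j_0$, whence $\sup_{j\geq j_0}\|A_j^{-1}\|_{\mathcal{L}}\leq 2/\alpha$; adjoining the finitely many earlier (finite) terms gives $\sup_j \|A_j^{-1}\|_{\mathcal{L}}<\infty$.

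With both conditions met, Lemma \ref{lemma:aux} yields $\|A_j^{-1} - A^{-1}\|_{\mathcal{L}}\to 0$. Left-multiplying by the bounded operator $\operK^{1/2}$ and using $\|\operK^{1/2}(A_j^{-1}-A^{-1})\|_{\mathcal{L}}\leq \|\operK^{1/2}\|_{\mathcal{L}}\,\|A_j^{-1}-A^{-1}\|_{\mathcal{L}}$, I obtain $\operK^{1/2}A_j^{-1}\to\operK^{1/2}A^{-1}$ in operator norm; in particular $\operK^{1/2}A_j^{-1}X\to\operK^{1/2}A^{-1}X$ in $L^2[0,1]$ for the fixed $X$. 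Continuity of the $L^2$-norm together with \eqref{Eq:MahSpectra} then gives
$$\|X_{\alpha_j}\|_K = \|\operK^{1/2}A_j^{-1}X\|_2 \longrightarrow \|\operK^{1/2}A^{-1}X\|_2 = \|X_{\alpha}\|_K,$$
and since this argument applies to almost every realization, the a.s. convergence follows.

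The step I expect to require the most care is the uniform bound $\sup_j\|A_j^{-1}\|_{\mathcal{L}}<\infty$, which is exactly where the hypothesis $\alpha>0$ (as opposed to $\alpha=0$) is essential: it forces the $\alpha_j$ to be eventually bounded away from zero, keeping the inverse norms controlled. A self-contained alternative that sidesteps Lemma \ref{lemma:aux} is to argue directly from the spectral series $\|X_{\alpha_j}\|_K^2 = \sum_i \lambda_i(\lambda_i+\alpha_j)^{-2}\langle X,e_i\rangle_2^2$: each summand converges to $\lambda_i(\lambda_i+\alpha)^{-2}\langle X,e_i\rangle_2^2$, and for $j\geq j_0$ it is dominated by $(2\alpha)^{-1}\langle X,e_i\rangle_2^2$ (using $(\lambda_i+\alpha_j)^2\geq 4\lambda_i\alpha_j$), a majorant that is summable a.s. because $\sum_i\langle X,e_i\rangle_2^2 \leq \|X\|_2^2<\infty$; dominated convergence for series then finishes the proof. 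I would nonetheless present the operator-theoretic route as the main argument, since the paper has already set up Lemma \ref{lemma:aux} for this purpose.
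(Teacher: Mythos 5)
Your main argument coincides with the paper's proof: both use \eqref{Eq:MahSpectra} to reduce the claim to $\|(\operK+\alpha_j\I)^{-1}-(\operK+\alpha\I)^{-1}\|_{\mathcal{L}}\to 0$, obtain this from Lemma \ref{lemma:aux} by checking $\|A_j-A\|_{\mathcal{L}}=|\alpha_j-\alpha|\to 0$ together with a uniform bound on the inverse norms, and then factor out $\|\operK^{1/2}\|_{\mathcal{L}}$ and $\|X\|_2$ (your explicit spectral bound $\|(\operK+\alpha_j\I)^{-1}\|_{\mathcal{L}}\le 1/\alpha_j$ is in fact stated more carefully than the paper's, which misprints the bound as $\inf_j\alpha_j$ rather than its reciprocal). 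The dominated-convergence argument you sketch at the end is a valid self-contained alternative, but since you present the operator-theoretic route as the main proof, your solution is essentially the paper's.
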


\begin{proof}
Note that by Proposition \ref{Prop:PropertiesProposal}(b), Eq \eqref{Eq:MahSpectra}, we have
\begin{align*}
\Big\vert \|X_{\alpha_j}\|_K - \|X_{\alpha}\|_K \Big\vert & \leq \|\operK^{1/2}(\operK+\alpha_j\I)^{-1}X-\operK^{1/2}(\operK+\alpha\I)^{-1}X\|_2\\
& \leq  \|\operK^{1/2} \|_{\mathcal{L}} \ \|(\operK+\alpha_j\I)^{-1}-(\operK+\alpha\I)^{-1}\|_{\mathcal{L}} \ \|X\|_2. 
\end{align*}
But it holds
\[
\|(\operK+\alpha_j\I)-(\operK+\alpha\I)\|_{\mathcal{L}} = |\alpha_j-\alpha|\to 0,\ \  \mbox{as}\  j\to\infty,
\]
and $\sup_j \|(\operK+\alpha_j\I)^{-1}\|_{\mathcal{L}} \leq \inf_j \alpha_j  <\infty$ (see \cite{Gohberg2013}, (1.14), p. 228). Therefore, $\|(\operK+\alpha_j\I)^{-1}-(\operK+\alpha\I)^{-1}\|_{\mathcal{L}}\to 0$, as $j\to\infty$, by Lemma \ref{lemma:aux}.
\end{proof}

Observe that Proposition \ref{Prop:Cont} implies the point convergence of the sequence of distribution functions of $M_{\alpha_j}(X,m)$  to that of $M_\alpha(X,m)$. This fact in turn implies the point convergence of the corresponding quantile functions.

%%%%%%%%%%%%%%%%%%%%%%%%%%%%%%%%%%%%%%%%%%%%%%%%%%%%%%%%%%%%%%%%%%%%%%%
%%%%%%%%%%%%%%%%%%%%%%%%%%%%%%%%%%%%%%%%%%%%%%%%%%%%%%%%%%%%%%%%%%%%%%%
%%%%%%%%%% 					Consistency				   %%%%%%%%%%%%%%%%
%%%%%%%%%%%%%%%%%%%%%%%%%%%%%%%%%%%%%%%%%%%%%%%%%%%%%%%%%%%%%%%%%%%%%%%
%%%%%%%%%%%%%%%%%%%%%%%%%%%%%%%%%%%%%%%%%%%%%%%%%%%%%%%%%%%%%%%%%%%%%%%

\section{A consistent estimator of the functional Mahalanobis distance}
\label{sec:consist}

Given a sample $X_1(t),\ldots,X_n(t)$ of realizations of the stochastic process $X(t)$, we want to estimate the Mahalanobis distance between any trajectory of the process $X$ and the mean function $m$ in a consistent way. Let $\bar{X}(t)=n^{-1}\sum_{i=1}^n X_i(t)$ be the sample mean and let
\[
\widehat{K}(s,t) = \frac{1}{n}\sum_{i=1}^n (X_i(s)-\bar{X}(s))(X_i(t)-\bar{X}(t))
\]
be the sample covariance function. The function $\widehat{K}$ defines the sample covariance operator $\widehat{\operK} f(\cdot) = \int_0^1 \wh K(\cdot, t)f(t)\dd t$. %Because $\wh K$ is a  continuous, symmetric and positive definite function, the operator $\widehat{\operK}$ is self-adjoint, positive and compact (see e.g. Propositions 1 and 2 of Chapter III of \cite{cucker2001}).

Define the following estimator for $M_\alpha(X,m)$: 
\begin{equation}
\label{Eq:EstimatorMah}
\widehat{M}_{\alpha,n}(X,\bar{X}) :=\|\wh{X}_\alpha - \bar{X}_\alpha\|_{\widehat K_n},
\end{equation}
where $\wh{X}_\alpha = (\wh{\operK} + \alpha\I)^{-1}\wh{\operK}X$ and $\bar{X}_\alpha = (\wh{\operK} + \alpha\I)^{-1}\wh{\operK}\bar{X}$. 

In the following Lemma we establish the  consistency in the operator norm of $\wh \operK$ as an estimator of $\operK$,
as a preliminary step to show the consistency of $\widehat{M}_{\alpha,n}$. 

\begin{lemma}
\label{lemma:consistencyK}
Suppose that $\mathbb{E}\|X\|_2^2<\infty$. Then $\|\bar{X}-m\|_2\to 0$, $\|\wh{K}-K\|_{L^2([0,1]\times [0,1])}\to 0$ and $\|\wh\operK - \operK\|_{\mathcal{L}}\to 0$, a.s. as $n\to\infty$.
\end{lemma}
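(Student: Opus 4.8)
The plan is to treat the three convergences as a chain, reducing each statement to a strong law of large numbers (SLLN) for Hilbert-space-valued random elements (Mourier's theorem), and finally reducing operator-norm convergence to $L^2$-convergence of kernels. I assume throughout that $X_1,X_2,\ldots$ are i.i.d.\ copies of $X$ with $\mathbb{E}\|X\|_2^2<\infty$. First I would handle $\|\bar X-m\|_2\to 0$: since $\mathbb{E}\|X\|_2^2<\infty$ implies $\mathbb{E}\|X\|_2<\infty$ by Jensen's inequality, the $L^2[0,1]$-valued SLLN of Mourier applies directly and gives $\bar X\to m=\mathbb{E}(X)$ a.s.\ in $L^2$-norm.

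Second, for the covariance kernel I would introduce the tensor products $X_i\otimes X_i\in L^2([0,1]\times[0,1])$, defined by $(f\otimes g)(s,t)=f(s)g(t)$, and record the identity
\[
\wh K \ = \ \frac1n\sum_{i=1}^n X_i\otimes X_i \ - \ \bar X\otimes\bar X.
\]
Because $\|X\otimes X\|_{L^2([0,1]^2)}=\|X\|_2^2$, the hypothesis gives $\mathbb{E}\|X\otimes X\|_{L^2([0,1]^2)}<\infty$, so Mourier's SLLN applied in $L^2([0,1]^2)$ yields $\frac1n\sum_{i=1}^n X_i\otimes X_i\to \mathbb{E}(X\otimes X)=K+m\otimes m$ a.s. For the bias term I would use that $f\mapsto f\otimes f$ is continuous on $L^2[0,1]$, via
\[
\|f\otimes f-g\otimes g\|_{L^2([0,1]^2)}\ \le\ \big(\|f\|_2+\|g\|_2\big)\,\|f-g\|_2,
\]
so the already-established $\bar X\to m$ gives $\bar X\otimes\bar X\to m\otimes m$ a.s. Subtracting, the $m\otimes m$ contributions cancel and $\wh K\to K$ a.s.\ in $L^2([0,1]^2)$.

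Finally, for the operator norm I would invoke the standard bound that, for an integral operator with kernel $H\in L^2([0,1]^2)$, the operator norm is dominated by the Hilbert--Schmidt norm, which coincides with $\|H\|_{L^2([0,1]^2)}$. Applying this to the kernel $\wh K-K$ gives $\|\wh\operK-\operK\|_{\mathcal{L}}\le\|\wh K-K\|_{L^2([0,1]^2)}\to 0$ a.s., completing the proof.

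The hard part will be making the SLLN step fully rigorous: one must confirm that $X$ and $X\otimes X$ are genuinely Bochner-measurable $L^2$-valued random elements and that the integrability hypothesis $\mathbb{E}\|X\|_2^2<\infty$ is exactly what Mourier's theorem requires for the second application. The remaining manipulations---the tensor identity for $\wh K$, the continuity estimate for $f\mapsto f\otimes f$, and the Hilbert--Schmidt domination of the operator norm---are routine once that foundation is in place.
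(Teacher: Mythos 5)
Your proposal is correct and follows essentially the same route as the paper: Mourier's SLLN in $L^2[0,1]$ for $\bar X$, Mourier's SLLN in $L^2([0,1]^2)$ for the empirical second-moment kernel (the paper writes $Z(s,t)=X(s)X(t)$ where you write $X\otimes X$), and finally domination of the operator norm by the Hilbert--Schmidt norm. Your treatment of the bias term $\bar X\otimes\bar X\to m\otimes m$ via the continuity estimate for $f\mapsto f\otimes f$ is in fact slightly more explicit than the paper's, which leaves that step implicit.
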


\begin{proof}
Mourier's  SLLN (see e.g. Theorem 4.5.2 in \cite{laha1979}, p. 452) implies directly $\|\bar{X}-m\|_2\to 0$ since $(\E \Vert X\Vert_2)^2 \leq  \mathbb{E}\|X\|_2^2 <\infty$ and $L^2[0,1]$ is a separable Banach space.

Consider the process $Z(s,t)=X(s)X(t)$. Then,  $Z\in L^2([0,1]\times[0,1])$ and this  is also a separable Banach space. Therefore, if $Z_i(s,t)=X_i(s)X_i(t)$, $\bar{Z}=n^{-1} Z_i(s,t)$, and  $m_z(s,t)=\E [ X(s)X(t) ]$,  using again Mourier's  SLLN we have
\[
\|\bar{Z}-m_z\|_{L^2([0,1]\times [0,1])}\to 0,\ \ \mbox{a.s.}, \ \ n\to\infty,
\]
 and also, since $\wh K(s,t) = \bar{Z}(s,t)-\bar{X}(s)\bar{X}(t)$, $\|\wh\operK-\operK\|_{HS} \to 0$ a.s., where  $\|\wh\operK-\operK\|_{HS} = \|\widehat{K}-K\|_{L^2([0,1]\times [0,1])}$  stands for the Hilbert-Schmidt norm of the operator  $\wh\operK-\operK$. 

Finally, for any $x\in L^2[0,1]$, 
$$\Vert (\wh\operK - \operK) x \Vert_2^2 \ = \int_0^1 \langle \wh K(t,\cdot)- K(t,\cdot), x \rangle_{2}^2 \dd t \ \leq \ \|x\|_2 ^2 \ \Vert \wh K - K \Vert^2_{L^2([0,1]\times [0,1])}.$$
%\begin{align*}
%\Vert (\wh\operK - \operK) x \Vert_2^2 &=   \int_0^1 \langle \wh K(t,\cdot)- K(t,\cdot), x \rangle_{2}^2 \dd t \\
%&\leq    \|x\|_2 ^2 \ \Vert \wh K - K \Vert^2_{L^2([0,1]\times [0,1])}.  \\
%\end{align*}
Thus, in particular, the operator norm is smaller than the Hilbert-Schmidt norm and we have  $\|\wh\operK - \operK\|_{\mathcal{L}}\leq \Vert \wh K - K \Vert_{L^2([0,1]\times [0,1])} \to 0$ a.s. as $n\to\infty$. 
\end{proof}
  
As already mentioned, by the square root of an operator $F$ we mean the operator $G$ such that $G^2 = F$.

\begin{theorem}
\label{Teo:ConsistMah}
If $\E \Vert X \Vert^2_2 <\infty$,  then  $\widehat{M}_{\alpha,n}(X,\bar{X})\to M_\alpha(X,m)$ a.s., as $n\to\infty$.
\end{theorem}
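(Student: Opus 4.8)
The plan is to recast both quantities in the operator-norm form supplied by Proposition~\ref{Prop:PropertiesProposal}(b) and then reduce everything to a single operator-norm convergence. Since $\wh\operK$ is itself a genuine (sample) covariance operator, Proposition~\ref{Prop:PropertiesProposal}(b) applies verbatim to it, and by linearity $\wh X_\alpha - \bar X_\alpha = (\wh\operK + \alpha\I)^{-1}\wh\operK(X-\bar X)$. Writing
\[
A := \operK^{1/2}(\operK+\alpha\I)^{-1}, \qquad A_n := \wh\operK^{1/2}(\wh\operK+\alpha\I)^{-1},
\]
the identity $\|x_\alpha\|_K^2 = \|\operK^{1/2}(\operK+\alpha\I)^{-1}x\|_2^2$ from \eqref{Eq:MahSpectra} (used once for $\operK$ and once for $\wh\operK$) gives
\[
M_\alpha(X,m) = \|A(X-m)\|_2, \qquad \wh M_{\alpha,n}(X,\bar X) = \|A_n(X-\bar X)\|_2.
\]
By the reverse triangle inequality it is then enough to control $\|A_n(X-\bar X) - A(X-m)\|_2$, and the crux of the matter will be to establish that $\|A_n - A\|_{\mathcal L}\to 0$ almost surely.

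To prove $\|A_n - A\|_{\mathcal L}\to 0$ I would treat the two factors of $A_n$ separately. For the resolvent, Lemma~\ref{lemma:consistencyK} gives $\|(\wh\operK+\alpha\I)-(\operK+\alpha\I)\|_{\mathcal L} = \|\wh\operK-\operK\|_{\mathcal L}\to 0$ a.s.; since $\wh\operK$ is positive semidefinite its resolvent satisfies $\|(\wh\operK+\alpha\I)^{-1}\|_{\mathcal L}\le 1/\alpha$ uniformly in $n$, so Lemma~\ref{lemma:aux} yields $\|(\wh\operK+\alpha\I)^{-1}-(\operK+\alpha\I)^{-1}\|_{\mathcal L}\to 0$ a.s. For the square roots I would invoke the operator-norm continuity of $B\mapsto B^{1/2}$ on the cone of positive operators, in the form of the Hölder bound $\|\wh\operK^{1/2}-\operK^{1/2}\|_{\mathcal L}\le \|\wh\operK-\operK\|_{\mathcal L}^{1/2}$, so that this too tends to $0$ a.s. Finally the splitting
\[
A_n - A = \bigl(\wh\operK^{1/2}-\operK^{1/2}\bigr)(\wh\operK+\alpha\I)^{-1} + \operK^{1/2}\bigl[(\wh\operK+\alpha\I)^{-1}-(\operK+\alpha\I)^{-1}\bigr],
\]
together with submultiplicativity of $\|\cdot\|_{\mathcal L}$, the bounds $\|(\wh\operK+\alpha\I)^{-1}\|_{\mathcal L}\le 1/\alpha$ and $\|\operK^{1/2}\|_{\mathcal L}\le \bar\lambda^{1/2}<\infty$, delivers $\|A_n-A\|_{\mathcal L}\to 0$ a.s.

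With the operator convergence in hand the conclusion follows from the decomposition
\[
A_n(X-\bar X) - A(X-m) = (A_n - A)(X-\bar X) + A(m-\bar X),
\]
whence
\[
\bigl|\wh M_{\alpha,n}(X,\bar X) - M_\alpha(X,m)\bigr| \le \|A_n - A\|_{\mathcal L}\,\|X-\bar X\|_2 + \|A\|_{\mathcal L}\,\|m-\bar X\|_2.
\]
Here $\|m-\bar X\|_2\to 0$ a.s. by Lemma~\ref{lemma:consistencyK}, $\|A\|_{\mathcal L}$ is a fixed finite constant, and $\|X-\bar X\|_2\le \|X\|_2 + \|\bar X\|_2$ is almost surely bounded in $n$ (since $\E\|X\|_2^2<\infty$ forces $\|X\|_2<\infty$ a.s. and $\|\bar X\|_2\to\|m\|_2$). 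Both terms therefore vanish almost surely, which gives the claim. The step I expect to be the main obstacle is the operator-norm continuity of the square root: at the bottom of the spectrum $t\mapsto t^{1/2}$ is only Hölder-$1/2$, so one cannot argue by Lipschitz continuity, and the cleanest remedy is the inequality $\|B^{1/2}-C^{1/2}\|_{\mathcal L}\le \|B-C\|_{\mathcal L}^{1/2}$ valid for positive operators; everything else is a routine combination of Lemmas~\ref{lemma:aux} and~\ref{lemma:consistencyK}.
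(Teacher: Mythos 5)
Your proof is correct, and its skeleton is the same as the paper's: both recast $\wh M_{\alpha,n}$ and $M_\alpha$ via Proposition~\ref{Prop:PropertiesProposal}(b), reduce the problem to $\|A_n-A\|_{\mathcal L}\to 0$ plus Lemma~\ref{lemma:consistencyK}, split $A_n-A$ into a square-root difference and a resolvent difference, and handle the resolvents exactly as you do (Lemma~\ref{lemma:aux} together with the uniform bound $\|(\wh\operK+\alpha\I)^{-1}\|_{\mathcal L}\le\alpha^{-1}$). The two cosmetic differences (you write the error as $(A_n-A)(X-\bar X)+A(m-\bar X)$ while the paper uses $A_n(m-\bar X)+(A_n-A)(X-m)$, and you attach $\operK^{1/2}$ rather than $\wh\operK^{1/2}$ to the resolvent difference) are immaterial; your variants even avoid having to note that $\|A_n\|_{\mathcal L}$ and $\|\wh\operK^{1/2}\|_{\mathcal L}$ are a.s.\ bounded. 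The one genuinely different ingredient is the square-root step, which you correctly identify as the crux: the paper invokes Theorem VIII.20 of \cite{reed1980} (norm continuity of the functional calculus for continuous functions, with a side remark to dispose of the ``vanishing at infinity'' hypothesis), whereas you use the elementary operator inequality $\|B^{1/2}-C^{1/2}\|_{\mathcal L}\le\|B-C\|_{\mathcal L}^{1/2}$ for positive operators, which follows from operator monotonicity of the square root: with $t=\|B-C\|_{\mathcal L}$ one has $B\le C+t\I$, hence $B^{1/2}\le(C+t\I)^{1/2}\le C^{1/2}+t^{1/2}\I$, and symmetrically. Your route is more self-contained and even quantitative (it yields the rate $\|\wh\operK-\operK\|_{\mathcal L}^{1/2}$), while the paper's is a citation to a standard theorem; both close the argument completely.
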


\begin{proof}
From Proposition \ref{Prop:PropertiesProposal}(b), Eq \eqref{Eq:MahSpectra}, we have
$\widehat{M}_{\alpha,n}(X,\bar{X}) = \|\wh\operK^{1/2}(\wh\operK + \alpha\I)^{-1}(X-\bar{X})\|_2$. Therefore,
\begin{align*}
& \Big\vert \widehat{M}_{\alpha,n}(X,\bar{X}) - M_\alpha(X,m)\Big\vert \leq 
\|\wh\operK^{1/2}(\wh\operK + \alpha\I)^{-1}(X-\bar{X}) - \operK^{1/2}(\operK + \alpha\I)^{-1}(X-m)\|_2\\
&\leq  \|\wh\operK^{1/2}(\wh\operK + \alpha\I)^{-1} \|_{\mathcal{L}}\ \|\bar{X}-m\|_2 +
\|\wh\operK^{1/2}(\wh\operK + \alpha\I)^{-1}  - \operK^{1/2}(\operK + \alpha\I)^{-1}\|_{\mathcal{L}}\ \|X-m\|_2.
\end{align*}
By Lemma \ref{lemma:consistencyK}, $\|\bar{X}-m\|_2$ goes to zero a.s. as $n\to \infty$. As a consequence, it is enough to show that  $\|\wh\operK^{1/2}(\wh\operK + \alpha\I)^{-1}  - \operK^{1/2}(\operK + \alpha\I)^{-1}\|_{\mathcal{L}}\to 0$ a.s.  
For that purpose, observe that
\begin{align*}
& \|\wh\operK^{1/2}(\wh\operK + \alpha\I)^{-1}  - \operK^{1/2}(\operK + \alpha\I)^{-1}\|_{\mathcal{L}} \\ 
& \leq 
 \|\wh\operK^{1/2}(\wh\operK + \alpha\I)^{-1}  - \wh\operK^{1/2}(\operK + \alpha\I)^{-1}\|_{\mathcal{L}} +
\|\wh\operK^{1/2}(\operK + \alpha\I)^{-1}  - \operK^{1/2}(\operK + \alpha\I)^{-1}\|_{\mathcal{L}}\\
& \leq 
 \|\wh\operK^{1/2}\|_{\mathcal{L}} \ \|(\wh\operK + \alpha\I)^{-1}  - (\operK + \alpha\I)^{-1}\|_{\mathcal{L}}   + \|\wh \operK^{1/2} -\operK^{1/2}\|_{\mathcal{L}} \  \|(\operK + \alpha\I)^{-1}\|_{\mathcal{L}}.
\end{align*}

Therefore, to end the proof we  will  show that   $\|\wh\operK^{1/2} -\operK^{1/2}\|_{\mathcal{L}}\to 0$ a.s. as $n\to \infty$ and $\|(\wh\operK + \alpha\I)^{-1}  - (\operK + \alpha\I)^{-1}\|_{\mathcal{L}}\to 0$ a.s. as $n\to \infty$. Since the square root is a continuous function in $[0,\infty)$, the  first  result follows from part one of Theorem VIII.20 of \cite{reed1980}. The requirement of the function vanishing at infinity is irrelevant here since, from Lemma \ref{lemma:consistencyK} we know $\|\wh\operK -\operK\|_{\mathcal{L}}\to 0$ a.s. as $n\to \infty$, which in particular implies that there exist a bound on the norm of operators $\widehat\operK$. 
 
Finally, observe that 
 $\|\wh\operK -\operK\|_{\mathcal{L}}=\|(\wh\operK + \alpha\I)  - (\operK + \alpha\I)\|_{\mathcal{L}}\to 0$ a.s., and we also have $\sup_n \| (\wh\operK + \alpha\I)^{-1}\|_{\mathcal{L}}\leq \alpha^{-1} <\infty$. Then, Lemma \ref{lemma:aux} implies  $\|(\wh\operK + \alpha\I)^{-1}  - (\operK + \alpha\I)^{-1}\|_{\mathcal{L}}\to 0$ a.s. as $n\to\infty$.
\end{proof}

\begin{corollary}\label{Cor:ConsistF}
In fact, the result is true when measuring the distance between the mean and any function $f$ in $L^2[0,1]$, that is, $\widehat{M}_{\alpha,n}(f,\bar{X})\to M_\alpha(f,m)$ a.s., as $n\to\infty$.
\end{corollary}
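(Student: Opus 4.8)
The plan is to follow the proof of Theorem \ref{Teo:ConsistMah} line for line, with the fixed function $f$ taking the place of the trajectory $X$. First, by Proposition \ref{Prop:PropertiesProposal}(b) and Eq.\ \eqref{Eq:MahSpectra} applied to both the sample and the population operators, I would rewrite $\widehat{M}_{\alpha,n}(f,\bar{X}) = \|\wh\operK^{1/2}(\wh\operK + \alpha\I)^{-1}(f-\bar{X})\|_2$ and $M_\alpha(f,m) = \|\operK^{1/2}(\operK + \alpha\I)^{-1}(f-m)\|_2$. The reverse triangle inequality for $\|\cdot\|_2$ then bounds $|\widehat{M}_{\alpha,n}(f,\bar{X}) - M_\alpha(f,m)|$ by the $L^2$ norm of the difference of the two arguments.

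Next I would split that difference exactly as in the theorem, inserting and subtracting $\operK^{1/2}(\operK + \alpha\I)^{-1}$ applied to the common fixed vector. This isolates a term controlled by $\|\wh\operK^{1/2}(\wh\operK + \alpha\I)^{-1}\|_{\mathcal{L}}\,\|\bar{X}-m\|_2$ and a term controlled by $\|\wh\operK^{1/2}(\wh\operK + \alpha\I)^{-1} - \operK^{1/2}(\operK + \alpha\I)^{-1}\|_{\mathcal{L}}\,\|f-m\|_2$. For the first term, $\|(\wh\operK + \alpha\I)^{-1}\|_{\mathcal{L}}\le\alpha^{-1}$ and $\|\wh\operK^{1/2}\|_{\mathcal{L}}$ is a.s.\ bounded (since $\|\wh\operK-\operK\|_{\mathcal{L}}\to 0$ by Lemma \ref{lemma:consistencyK}), so the prefactor stays bounded while $\|\bar{X}-m\|_2\to 0$ a.s. For the second term, the operator-norm difference tends to $0$ a.s.\ by the very computation already carried out in the proof of Theorem \ref{Teo:ConsistMah} (continuity of the square root via Theorem VIII.20 of \cite{reed1980} together with resolvent convergence via Lemma \ref{lemma:aux}), and this computation involves only $\wh\operK$, $\operK$ and $\alpha$, never the argument of the distance.

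The reason no genuinely new difficulty arises — and hence the whole content of the corollary — is the observation that the argument of the Mahalanobis distance enters the final bound only through the single scalar factor $\|f-m\|_2$. In Theorem \ref{Teo:ConsistMah} this factor was $\|X-m\|_2$, finite only almost surely under the hypothesis $\mathbb{E}\|X\|_2^2<\infty$; here it is simply a finite deterministic constant, so the second term vanishes just as before. Therefore $\widehat{M}_{\alpha,n}(f,\bar{X})\to M_\alpha(f,m)$ a.s., and the passage from a random trajectory to an arbitrary $f\in L^2[0,1]$ is, if anything, slightly cleaner than the original argument.
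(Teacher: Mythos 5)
Your proof is correct and is exactly the argument the paper intends: the corollary is stated without proof precisely because the proof of Theorem \ref{Teo:ConsistMah} carries over verbatim, the trajectory $X$ entering the final bound only through the factor $\Vert X-m\Vert_2$, which may be replaced by the finite constant $\Vert f-m\Vert_2$. Your write-up makes this observation explicit, and there is no gap.
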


Putting together Theorem \ref{Teo:ConsistMah} and Proposition \ref{prop:dist} we obtain the asymptotic distribution of $\widehat{M}_{\alpha,n}$:

\begin{corollary}
\label{cor:asymptoticdist}
Under the assumptions of Theorem \ref{Teo:ConsistMah} and Corollary \ref{cor:distribution}, and with the same notation, 
$\widehat{M}_{\alpha,n}(X,\bar{X})$ converges in distribution to  $\sum_{j=1}^\infty \beta_jY_j$, where  $\beta_j=\lambda_j^2(\lambda_j+\alpha)^{-2}$ and $Y_1, Y_2,\ldots$ are independent  $\chi^2_1$ random variables.
\end{corollary}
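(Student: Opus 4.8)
The plan is to derive this as an essentially immediate consequence of the almost sure convergence already proved in Theorem \ref{Teo:ConsistMah}, combined with the explicit description of the limiting law in Corollary \ref{cor:distribution}, using only the elementary fact that almost sure convergence implies convergence in distribution. So the bulk of the work has in fact been done in the preceding results, and what remains is to assemble them correctly.

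First I would fix a trajectory $X$ of the process, taken independent of the training sample $X_1,\ldots,X_n$. Theorem \ref{Teo:ConsistMah} gives $\widehat{M}_{\alpha,n}(X,\bar X)\to M_\alpha(X,m)$ almost surely as $n\to\infty$. Since this exhibits a sequence of nonnegative random variables converging a.s.\ to the random variable $M_\alpha(X,m)$, and a.s.\ convergence always entails convergence in law, I would conclude that $\widehat{M}_{\alpha,n}(X,\bar X)$ converges in distribution to $M_\alpha(X,m)$; equivalently, applying the continuous mapping theorem to $t\mapsto t^2$, the squared statistic converges in distribution to $M_\alpha(X,m)^2$. Next I would identify this limit: by Corollary \ref{cor:distribution}, under the Gaussian hypothesis one has $M_\alpha(X,m)^2=\sum_{j=1}^\infty \beta_j Y_j$ with $\beta_j=\lambda_j^2(\lambda_j+\alpha)^{-2}$ and $Y_1,Y_2,\ldots$ independent $\chi^2_1$ variables, the a.s.\ convergence of this series being guaranteed by the submartingale/Doob argument of Proposition \ref{prop:dist}(b). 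Substituting this representation of the limiting law into the convergence just obtained yields exactly the assertion of the corollary.

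The one point that requires care — and is the main, albeit modest, obstacle — is the bookkeeping of the two sources of randomness. The convergence in Theorem \ref{Teo:ConsistMah} is most naturally read as holding, for each fixed $X$, almost surely over the sample. To speak of convergence in distribution of $\widehat{M}_{\alpha,n}(X,\bar X)$ as a single random variable on the product space carrying both $X$ and the infinite sample, I would first upgrade the conditional a.s.\ statement to a joint one: conditioning on $X$, the event $\{\widehat{M}_{\alpha,n}(X,\bar X)\to M_\alpha(X,m)\}$ has sample-probability one for every $X$, so by Fubini (integrating the indicator of this event, i.e.\ the tower property) it has probability one on the product space. Joint a.s.\ convergence then delivers convergence in distribution of the joint object, with the law of the limit as described above. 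No integrability beyond the standing assumption $\E\|X\|_2^2<\infty$ and the finiteness already established in Proposition \ref{prop:dist} is needed, so the argument closes cleanly.
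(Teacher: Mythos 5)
Your proposal is correct and follows essentially the same route as the paper, whose ``proof'' of this corollary is precisely the one-line observation that the almost sure convergence of Theorem \ref{Teo:ConsistMah} implies convergence in distribution, with the limit law identified by Corollary \ref{cor:distribution}. Your two additions---the Fubini argument handling the joint randomness of $X$ and the sample, and the continuous-mapping step with $t\mapsto t^2$ reconciling the fact that it is $M_\alpha(X,m)^2$ (not $M_\alpha(X,m)$) that equals $\sum_j \beta_j Y_j$---are careful refinements of details the paper leaves implicit (indeed, the statement as printed conflates the distance with its square), not a different method.
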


We can also prove another consistency result involving the distances between the sample and the population means, which could be useful for doing inference on the mean.

\begin{theorem}\label{teo:consistdist}
If $\E \Vert X \Vert^2_2 <\infty$, and with the same notation of Proposition \ref{prop:dist}, it holds,
\begin{equation}\label{eq:nconsist}
\sqrt{n} \ \wh M_{\alpha,n} (\bar{X},m) \ \std \ \Big(\sum_{j=1}^\infty \frac{\lambda_j^2}{(\lambda_j+\alpha)^2} Y_j\Big)^{\frac{1}{2}},
\end{equation}
where $Y_1, Y_2,\ldots$ are independent $\chi^2_1$ random variables.
\end{theorem}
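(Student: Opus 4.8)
The plan is to recognize $\sqrt{n}\,\wh M_{\alpha,n}(\bar X,m)$ as a fixed continuous functional applied to the normalized sample mean, and then to combine a Hilbert-space central limit theorem with the operator-norm consistency already established in the proof of Theorem \ref{Teo:ConsistMah}. First, applying the second equality in \eqref{Eq:MahSpectra} of Proposition \ref{Prop:PropertiesProposal}(b) with the sample operator $\wh\operK$ in place of $\operK$ gives
\[
\sqrt{n}\,\wh M_{\alpha,n}(\bar X,m) = \big\| \wh\operK^{1/2}(\wh\operK + \alpha\I)^{-1}\,\sqrt{n}(\bar X - m)\big\|_2 .
\]
Write $A_n := \wh\operK^{1/2}(\wh\operK + \alpha\I)^{-1}$ and $A := \operK^{1/2}(\operK + \alpha\I)^{-1}$, both bounded operators on $L^2[0,1]$. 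Inside the proof of Theorem \ref{Teo:ConsistMah} it was already shown that $\|A_n - A\|_{\mathcal{L}}\to 0$ a.s.

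Second, I would invoke the central limit theorem in the separable Hilbert space $L^2[0,1]$: since $\E\|X\|_2^2<\infty$, the sequence $W_n := \sqrt{n}(\bar X - m)$ converges in distribution to a centered Gaussian element $G$ of $L^2[0,1]$ whose covariance operator is exactly $\operK$. In particular, the Karhunen--Lo\`eve expansion of $G$ yields $\langle G,e_j\rangle_2 = \sqrt{\lambda_j}\,\xi_j$ with $\xi_1,\xi_2,\ldots$ i.i.d.\ $N(0,1)$.

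Third, I would put these ingredients together by a Slutsky-type argument adapted to the infinite-dimensional setting. Decomposing $A_n W_n = A W_n + (A_n - A)W_n$, the first term satisfies $A W_n \std A G$ by the continuous mapping theorem, since $w\mapsto Aw$ is continuous ($A$ being a fixed bounded operator); the remainder obeys the deterministic bound $\|(A_n - A)W_n\|_2 \le \|A_n - A\|_{\mathcal{L}}\,\|W_n\|_2 = o_{\mathrm P}(1)\cdot O_{\mathrm P}(1) = o_{\mathrm P}(1)$, where $\|W_n\|_2 = O_{\mathrm P}(1)$ follows from tightness of the weakly convergent sequence $W_n$. Hence $A_n W_n \std A G$, and a further application of the continuous mapping theorem with the norm $\|\cdot\|_2$ gives $\sqrt{n}\,\wh M_{\alpha,n}(\bar X,m)\std \|A G\|_2$. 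A direct computation identifies the limit: since $A e_j = \tfrac{\sqrt{\lambda_j}}{\lambda_j+\alpha}\,e_j$,
\[
\|A G\|_2^2 = \sum_{j=1}^\infty \frac{\lambda_j}{(\lambda_j+\alpha)^2}\langle G,e_j\rangle_2^2 = \sum_{j=1}^\infty \frac{\lambda_j^2}{(\lambda_j+\alpha)^2}\,\xi_j^2,
\]
which is precisely the announced limit with $Y_j := \xi_j^2\sim\chi_1^2$ independent.

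The main obstacle will be making the Slutsky step rigorous: the operator $A_n$ and the random element $W_n$ are built from the same sample and are therefore dependent, so the argument must avoid any independence assumption between them. This is exactly why I would route the proof through the deterministic operator bound $\|(A_n-A)W_n\|_2\le\|A_n-A\|_{\mathcal{L}}\|W_n\|_2$ together with tightness, rather than through a product-of-independent-factors argument. A secondary technical point is to confirm that weak convergence $W_n\std G$ does deliver $\|W_n\|_2 = O_{\mathrm P}(1)$, which follows from the continuous mapping theorem applied to the continuous functional $\|\cdot\|_2$, and to verify that the Hilbert-space CLT is available under the stated moment condition $\E\|X\|_2^2<\infty$.
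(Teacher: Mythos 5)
Your proof is correct, and it differs from the paper's in how the main term is handled, in a way worth spelling out. The paper splits the scalar quantity as $\sqrt{n}\,\wh M_{\alpha,n}(\bar X,m) = \sqrt{n}\big(\wh M_{\alpha,n}(\bar X,m) - M_\alpha(\bar X,m)\big) + \sqrt{n}\,M_\alpha(\bar X,m)$, kills the first difference with the same operator-norm bound you use, and then applies the Hilbert-space CLT a \emph{second} time, to the transformed i.i.d.\ sample $\chi_{\alpha,i} = \operK^{1/2}(\operK+\alpha\I)^{-1}X_i$; this forces it to check that expectation commutes with the bounded operator (Bochner integrability), to compute the covariance operator of $\chi_{\alpha,1}$ by an adjoint calculation, and finally to invoke the Karhunen--Lo\`eve expansion of the limiting Gaussian element. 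You instead apply the CLT only once, to the raw data, obtaining $W_n = \sqrt{n}(\bar X - m)\std G$ with covariance operator $\operK$, and push the limit through the fixed bounded operator $A=\operK^{1/2}(\operK+\alpha\I)^{-1}$ by the continuous mapping theorem, after which identifying the law of $\|AG\|_2$ is a purely spectral computation from $Ae_j = \sqrt{\lambda_j}\,(\lambda_j+\alpha)^{-1}e_j$. Your vector-level decomposition $A_nW_n = AW_n + (A_n-A)W_n$ is the exact analogue of the paper's scalar split (by the reverse triangle inequality they carry the same content), and both proofs rest on the same two external inputs: the a.s.\ convergence $\|A_n - A\|_{\mathcal{L}}\to 0$ established inside the proof of Theorem \ref{Teo:ConsistMah}, and the functional CLT in $L^2[0,1]$ under $\E\|X\|_2^2<\infty$. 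What your order of operations buys is economy: no Bochner-commutation step and no adjoint computation, since the covariance structure of the limit is read off directly in the eigenbasis. What the paper's order buys is an explicit description of the limit as the norm of a Gaussian process with covariance operator $\operK^{1/2}(\operK+\alpha\I)^{-1}\operK(\operK+\alpha\I)^{-1}\operK^{1/2}$, which it then diagonalizes. Finally, your treatment of the dependence between $A_n$ and $W_n$ --- routing everything through the deterministic bound $\|(A_n-A)W_n\|_2\le\|A_n-A\|_{\mathcal{L}}\|W_n\|_2$ together with tightness of $\|W_n\|_2$, rather than any independence argument --- is exactly the right way to make the Slutsky step rigorous, and it mirrors what the paper does implicitly when it multiplies an a.s.\ vanishing operator norm by $\|\sqrt{n}(\bar X - m)\|_2$.
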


\begin{proof}
We can rewrite the left-hand side of Equation \eqref{eq:nconsist} as,
\begin{equation}\label{eq:demn}
\sqrt{n} \ \wh M_{\alpha,n} (\bar{X},m) \ = \ \sqrt{n} ( \wh M_{\alpha,n} (\bar{X},m) - M_{\alpha}(\bar{X},m)) + \sqrt{n} M_{\alpha}(\bar{X},m).
\end{equation}
Now, from Equation \eqref{eq:alpha_Mah} and Proposition \ref{Prop:PropertiesProposal}, we have

\small
\begin{eqnarray*}
&&\sqrt{n} \vert \wh M_{\alpha,n} (\bar{X},m) - M_{\alpha}(\bar{X},m) \vert \leq  \sqrt{n} \ \Vert \wh\operK^{1/2}(\wh\operK + \alpha\I)^{-1}(\bar{X}-m) - \operK^{1/2}(\operK + \alpha\I)^{-1}(\bar{X}-m) \Vert_2 \\
%%%
&& \leq \Vert \wh\operK^{1/2}(\wh\operK + \alpha\I)^{-1} - \operK^{1/2}(\operK + \alpha\I)^{-1}\Vert_{\mathcal{L}} \ \Vert \sqrt{n}(\bar{X}-m) \Vert_2.
\end{eqnarray*}
\normalsize

 As a part  of the proof of Theorem \ref{Teo:ConsistMah} we have seen that the first norm in the right-hand side goes to zero a.s. as $n\to\infty$. From the Functional Central Limit Theorem (e.g., Theorem 8.1.1 of \cite{hsing2015theoretical}), $\sqrt{n}(\bar{X}-m)$ converges in distribution in $L^2[0,1]$ to a Gaussian stochastic process $Z$ with zero mean and covariance operator $\operK$. Since the norm is a continuous function in this space, by the continuous mapping theorem the second term converges in distribution to the random variable $\Vert Z \Vert_2$. Thus, by  Slutsky's  theorem, the distribution of the product goes to zero, and this convergence holds also in probability since the limit is a constant.

We can rewrite the remaining term of Equation \eqref{eq:demn} as,
$$\sqrt{n} M_{\alpha}(\bar{X},m) \ = \ \sqrt{n} \ \Vert \operK^{1/2}(\operK + \alpha\I)^{-1}(\bar{X}-m) \Vert_2  \ = \ \Big\Vert \sqrt{n} \ \Big( \frac{1}{n}\sum_{i=1}^n \chi_{\alpha,i} - \mu_\alpha \Big) \Big\Vert_2,$$
where we denote $\chi_{\alpha,i} = \operK^{1/2}(\operK + \alpha\I)^{-1} X_i$ and $\mu_{\alpha} = \operK^{1/2}(\operK + \alpha\I)^{-1} m$. Since $\operK^{1/2}(\operK + \alpha\I)^{-1}$ is a bounded linear operator and the process $X$ is Bochner-integrable ($\E \Vert X \Vert_2<\infty$), the expectation and the operator commute, that is,
$$\E \chi_{\alpha,1} \ = \ \E [\operK^{1/2}(\operK + \alpha\I)^{-1} X_1] \ = \ \operK^{1/2}(\operK + \alpha\I)^{-1} \E X_1 \ = \ \mu_\alpha.$$
Therefore, we can use again the Functional Central Limit Theorem with $\chi_{\alpha,i}$ and $\mu_\alpha$, since
$$\E \Vert \chi_{\alpha,1} \Vert_2^2 \ \leq \ \Vert \operK^{1/2}(\operK + \alpha\I)^{-1} \Vert_{\mathcal{L}}^2 \ \E \Vert X_1\Vert_2^2 \ < \ \infty,$$
which gives us that $\sqrt{n} M_{\alpha}(\bar{X},m)$ converges in distribution to $\Vert \xi\Vert_2$,  $\xi$ being  a random element with zero mean and  whose covariance operator is the same as that of $\chi_{\alpha,1}$.

Using the same reasoning as at beginning of the proof of Theorem \ref{teo:invariance}  and denoting as $A^*$ the adjoint of the operator $A$, the covariance operator of $\chi_{\alpha,1}$ is given by 
$$\operK^{1/2}(\operK + \alpha\I)^{-1} \operK [\operK^{1/2}(\operK + \alpha\I)^{-1}]^* \ = \ \operK^{1/2}(\operK + \alpha\I)^{-1} \operK (\operK + \alpha\I)^{-1} \operK^{1/2},$$
since both $\operK^{1/2}$ and $(\operK + \alpha\I)^{-1}$ are self-adjoint operators  (for instance, Theorem 3.35 and Problem 3.32 of \cite[Chapter 5]{kato2013} and Proposition 2.4 of \cite[Chapter X]{conway1990}). Now since $\xi$ is a zero-mean Gaussian process with compact covariance operator, it has an associated orthonormal basis of eigenfunctions (Spectral theorem for compact and self-adjoint operators, for instance Theorem 2 of Chapter 2 of \cite{cucker2001}). This operator has the same eigenfunctions as $\operK$ and its eigenvalues are $\lambda_j^2(\lambda_j+\alpha)^{-2}$. Thus, using its  Karhunen-Lo\`eve  representation we get
$$\Vert \xi\Vert_2 \ = \ \Vert \sum_{j=1}^\infty Z_j e_j \Vert_2 \ = \ \Big(\sum_{j=1}^\infty Z_j^2\Big)^{\frac{1}{2}},$$
where $e_j$ are the eigenfunctions of $\operK$ and $Z_j$ are independent Gaussian random variables with zero mean and  variances  $\lambda_j^2(\lambda_j+\alpha)^{-2}$ (the eigenvalues of the covariance operator of $\xi$). Then the result follows from the standardization of these $Z_j$, applying Slutsky's theorem to the sum of Equation \eqref{eq:demn}.
\end{proof}

%%%%%%%%%%%%%%%%%%%%%%%%%%%%%%%%%%%%%%%%%%%%%%%%%%%%%%%%%%%%%%%%%%%%%%%
%%%%%%%%%%%%%%%%%%%%%%%%%%%%%%%%%%%%%%%%%%%%%%%%%%%%%%%%%%%%%%%%%%%%%%%
%%%%%%%%%% 					Applications			   %%%%%%%%%%%%%%%%
%%%%%%%%%%%%%%%%%%%%%%%%%%%%%%%%%%%%%%%%%%%%%%%%%%%%%%%%%%%%%%%%%%%%%%%
%%%%%%%%%%%%%%%%%%%%%%%%%%%%%%%%%%%%%%%%%%%%%%%%%%%%%%%%%%%%%%%%%%%%%%%

\section{Statistical applications}\label{sec:appl}

The purpose of this section is to give a general overview of possible applications of the proposed distance. The selected models have been mostly chosen among those previously proposed  in the literature. However, as usual in empirical studies, many other meaningful scenarios could be considered. Thus we make no attempt to reach any definitive conclusion. Only the long term practitioners' experience will lead to a safer judgment.

%%%%%%%%%%%%%%%%%%%%%%%%%%%%%%%%%%%%%%%%%%%%%%%%%%%%%%%%%%%%%%%%%%%%%%%
%%%%%%%%%% 			Exploratory analysis			   %%%%%%%%%%%%%%%%
%%%%%%%%%%%%%%%%%%%%%%%%%%%%%%%%%%%%%%%%%%%%%%%%%%%%%%%%%%%%%%%%%%%%%%%

\subsection{Exploratory analysis}

The Mahalanobis distance can be used to analyze and summarize some interesting features of the data which, for instance, can be done by generating boxplots. We follow here the experimental setting proposed in \cite{arribas2014}, where some real and simulated data sets are used for outliers detection and functional boxplots.

\

\noindent\textit{Outliers detection}

The simulation study proposed in \cite{arribas2014} checks the performance of ten different methods. The curves are generated using three different combinations of  the main process (from which most trajectories are drawn) and the contamination one (from which the outliers come from). Given a contamination rate $c$, $n-\lceil c\cdot n\rceil$ curves are drawn from the main process and $\lceil c\cdot n\rceil$ from the contamination one (we denote as $\lceil x \rceil$ the smallest integer not  smaller  than $x$).
\begin{itemize}
\item  The   first model is defined by, \\[-7mm]
\begin{itemize}
\item[] $\text{main process: } X(t) \ = \ 30t(1-t)^{3/2} + \varepsilon(t),$ \\[-7mm]
\item[] $\text{contamination process: } X(t) \ = \ 30t^{3/2}(1-t) + \varepsilon(t),$\\[-7mm]
\end{itemize}
for $t\in[0,1]$, where $\varepsilon$ is a Gaussian process with zero mean and covariance function $K(s,t) = 0.3 \ \text{exp}(-|s-t|/0.3)$. 
\item  The second model is given by,\\[-7mm]
\begin{itemize}
\item[] $\text{main process: } X(t) \ = \ 4t + \varepsilon(t),$ \\[-7mm]
\item[] $\text{contamination process: } X(t) \ = \ 4t + (-1)^u1.8 + (0.02\pi)^{-1/2} \  e^{\frac{-(t-\mu)^2}{0.02}} + \varepsilon(t),$\\[-7mm]
\end{itemize}
where $\varepsilon$ is a Gaussian process with zero mean and covariance function $K(s,t) = \text{exp}(-|s-t|)$, $u$ follows a Bernoulli distribution with parameter $0.5$ and $\mu$ is uniformly distributed over $[0.25,0.75]$. 
\item Finally, using the same definitions for $\varepsilon$ and $\mu$, the third model is given by,\\[-7mm]
\begin{itemize}
\item[] $\text{main process: } X(t) \ = \ 4t + \varepsilon(t),$ \\[-7mm]
\item[] $\text{contamination process: } X(t) \ = \ 4t + 2\sin(4(t+\mu)\pi) + \varepsilon(t).$\\[-7mm]
\end{itemize}
\end{itemize}

We ran 100 simulations of each model with different contamination rates $c = 0$, $0.05$, $0.1$, $0.15$ and $0.2$. The sample size  for each simulation was  $100$ and the curves are   simulated in a discretized fashion  over a grid of $50$ equidistant points in $[0,1]$. We have checked nine out of the ten methods exposed in \cite{arribas2014}, whose code is provided by the authors. The details about the implementations of  each method  can be found on the paper. We have adapted the code provided by the authors to include our method. 

In order to formally define what  we exactly mean by ``an outlier'' in our case, we have approximated the distribution of the random variable  $\Vert X_\alpha-m_\alpha\Vert_K$ given in Corollary \ref{cor:distribution}  through a Monte Carlo sample of size 2000 where the Monte Carlo observations are generated using the covariance structure of the original data.

 Then we mark as outliers the curves whose distance to the mean is greater that the $95\%$ of the  distances for the simulated data. The main drawback of this method is that the distribution of  Corollary \ref{cor:distribution}  is computed using the covariance structure of the data. Therefore, if the number of outliers is  large  compared with the sample size, this  estimation  is biased. In order to partially overcome this problem, we compute the covariance function using the robust minimum covariance determinant (MCD) estimator.
 
 Regarding our proposal,  we have noticed that  the choice of $\alpha$ does not affect the number of selected outliers significantly. We have chosen $\alpha=0.01$, but  an automatic technique (as as the one proposed in \cite{arribas2014}  for the choice  of the factor of the adjusted outliergram) could be used as well.  

The rates of correct ($p_c$) and false ($p_f$) outliers detected for each method on the different settings can be found in Table \ref{Table:Outs}.  An extended version of this table can be found in the Supplementary Material document.  We can see that the Mahalanobis-based method proposed in this paper  (denoted \textit{Mah RKHS} in the table)  is quite competitive.

% latex table generated in R 3.3.2 by xtable 1.8-2 package
% Sun Sep 24 18:49:28 2017
\begin{table}[ht]
\centering
\resizebox{\textwidth}{!}{
\begin{tabular}{lcccccc}
  \hline
  \hline
c= 0 & Model  1 &  & Model  2 &  & Model  3 &  \\ \hline
  Method & $p_c$ & $p_f$ & $p_c$ & $p_f$ & $p_c$ & $p_f$ \\ \hline
%  Fun. BP & - & 0.001  ( 0.003) & - & 0.001  ( 0.002) & - & 0.000  ( 0.002) \\ 
  Adj. Fun. BP & - & 0.007  ( 0.010) & - & 0.006  ( 0.010) & - & 0.007  ( 0.012) \\ 
%  Fun. HDR BP & - & 0.010  ( 0.000) & - & 0.010  ( 0.000) & - & 0.010  ( 0.000) \\ 
  Rob. Mah. Dist. & - & 0.016  ( 0.014) & - & 0.015  ( 0.013) & - & 0.015  ( 0.015) \\ 
  ISE & - & 0.038  ( 0.020) & - & 0.032  ( 0.021) & - & 0.033  ( 0.021) \\ 
  DB trimming & - & 0.013  ( 0.007) & - & 0.012  ( 0.006) & - & 0.014  ( 0.007) \\ 
%  DB weighting & - & 0.016  ( 0.012) & - & 0.015  ( 0.011) & - & 0.014  ( 0.011) \\ 
%  Outliergram & - & 0.054  ( 0.025) & - & 0.057  ( 0.027) & - & 0.058  ( 0.022) \\ 
  Adj. Ourliergram & - & 0.012  ( 0.012) & - & 0.011  ( 0.013) & - & 0.011  ( 0.014) \\ 
  Mah. RKHS & - & 0.037  ( 0.015) & - & 0.033  ( 0.018) & - & 0.035  ( 0.016) \\ \hline
 c= 0.05 & Model  1 &  & Model  2 &  & Model  3 &  \\ \hline
  Method & $p_c$ & $p_f$ & $p_c$ & $p_f$ & $p_c$ & $p_f$ \\
  \hline
%  Fun. BP & 0.186  ( 0.193) & 0.001  ( 0.003) & 0.208  ( 0.220) & 0.000  ( 0.001) & 0.184  ( 0.179) & 0.000  ( 0.002) \\ 
  Adj. Fun. BP & 0.576  ( 0.282) & 0.008  ( 0.012) & 0.551  ( 0.330) & 0.006  ( 0.010) & 0.588  ( 0.344) & 0.008  ( 0.012) \\ 
%  Fun. HDR BP & 0.155  ( 0.084) & 0.002  ( 0.004) & 0.131  ( 0.096) & 0.004  ( 0.005) & 0.057  ( 0.091) & 0.008  ( 0.005) \\ 
  Rob. Mah. Dist. & 0.976  ( 0.096) & 0.008  ( 0.009) & 0.361  ( 0.250) & 0.008  ( 0.010) & 0.104  ( 0.153) & 0.015  ( 0.013) \\ 
  ISE & 0.865  ( 0.313) & 0.033  ( 0.020) & 1.000  ( 0.000) & 0.038  ( 0.026) & 1.000  ( 0.000) & 0.033  ( 0.021) \\ 
  DB trimming & 0.947  ( 0.183) & 0.008  ( 0.009) & 0.957  ( 0.135) & 0.008  ( 0.009) & 0.994  ( 0.035) & 0.006  ( 0.007) \\ 
%  DB weighting & 0.894  ( 0.259) & 0.008  ( 0.009) & 0.941  ( 0.203) & 0.012  ( 0.011) & 0.957  ( 0.168) & 0.011  ( 0.009) \\ 
%  Outliergram & 0.998  ( 0.020) & 0.038  ( 0.022) & 0.998  ( 0.020) & 0.033  ( 0.021) & 1.000  ( 0.000) & 0.036  ( 0.023) \\ 
  Adj. Ourliergram & 0.994  ( 0.035) & 0.006  ( 0.008) & 0.978  ( 0.070) & 0.006  ( 0.009) & 0.998  ( 0.020) & 0.012  ( 0.014) \\  
  Mah. RKHS & 0.998  ( 0.020) & 0.022  ( 0.016) & 1.000  ( 0.000) & 0.027  ( 0.014) & 1.000  ( 0.000) & 0.031  ( 0.016) \\ 
  \hline
  c= 0.1 & Model  1 &  & Model  2 &  & Model  3 &  \\ \hline
  Method & $p_c$ & $p_f$ & $p_c$ & $p_f$ & $p_c$ & $p_f$ \\ 
  \hline
%  Fun. BP & 0.139  ( 0.123) & 0.000  ( 0.001) & 0.158  ( 0.151) & 0.000  ( 0.002) & 0.134  ( 0.128) & 0.000  ( 0.002) \\ 
  Adj. Fun. BP & 0.549  ( 0.239) & 0.005  ( 0.008) & 0.593  ( 0.268) & 0.008  ( 0.010) & 0.632  ( 0.248) & 0.008  ( 0.012) \\ 
%  Fun. HDR BP & 0.073  ( 0.044) & 0.003  ( 0.005) & 0.083  ( 0.038) & 0.002  ( 0.004) & 0.047  ( 0.050) & 0.006  ( 0.006) \\ 
  Rob. Mah. Dist. & 0.961  ( 0.105) & 0.004  ( 0.007) & 0.373  ( 0.170) & 0.007  ( 0.009) & 0.104  ( 0.108) & 0.011  ( 0.014) \\ 
  ISE & 0.790  ( 0.335) & 0.027  ( 0.017) & 1.000  ( 0.000) & 0.036  ( 0.021) & 1.000  ( 0.000) & 0.033  ( 0.022) \\ 
  DB trimming & 0.808  ( 0.340) & 0.009  ( 0.009) & 0.989  ( 0.045) & 0.010  ( 0.010) & 0.995  ( 0.030) & 0.008  ( 0.011) \\ 
%  DB weighting & 0.176  ( 0.247) & 0.001  ( 0.004) & 0.910  ( 0.232) & 0.005  ( 0.008) & 0.922  ( 0.258) & 0.006  ( 0.008) \\ 
%  Outliergram & 0.981  ( 0.040) & 0.020  ( 0.014) & 0.998  ( 0.014) & 0.018  ( 0.012) & 1.000  ( 0.000) & 0.020  ( 0.016) \\ 
  Adj. Ourliergram & 0.897  ( 0.118) & 0.006  ( 0.009) & 0.971  ( 0.076) & 0.006  ( 0.009) & 1.000  ( 0.000) & 0.007  ( 0.011) \\ 
  Mah. RKHS & 0.767  ( 0.148) & 0.014  ( 0.012) & 1.000  ( 0.000) & 0.014  ( 0.011) & 0.995  ( 0.030) & 0.015  ( 0.013) \\ 
  \hline
  c= 0.15 & Model  1 &  & Model  2 &  & Model  3 &  \\ \hline
  Method & $p_c$ & $p_f$ & $p_c$ & $p_f$ & $p_c$ & $p_f$ \\ 
  \hline
%  Fun. BP & 0.098  ( 0.105) & 0.000  ( 0.002) & 0.114  ( 0.101) & 0.000  ( 0.002) & 0.134  ( 0.130) & 0.000  ( 0.001) \\ 
  Adj. Fun. BP & 0.494  ( 0.215) & 0.006  ( 0.010) & 0.550  ( 0.242) & 0.006  ( 0.009) & 0.584  ( 0.247) & 0.006  ( 0.009) \\ 
%  Fun. HDR BP & 0.043  ( 0.032) & 0.004  ( 0.006) & 0.063  ( 0.016) & 0.001  ( 0.003) & 0.050  ( 0.029) & 0.003  ( 0.005) \\ 
  Rob. Mah. Dist. & 0.927  ( 0.098) & 0.001  ( 0.003) & 0.324  ( 0.184) & 0.004  ( 0.007) & 0.152  ( 0.175) & 0.005  ( 0.008) \\ 
  ISE & 0.778  ( 0.349) & 0.027  ( 0.018) & 0.999  ( 0.007) & 0.040  ( 0.029) & 1.000  ( 0.000) & 0.034  ( 0.023) \\ 
  DB trimming & 0.444  ( 0.410) & 0.009  ( 0.011) & 0.981  ( 0.099) & 0.016  ( 0.015) & 0.993  ( 0.067) & 0.009  ( 0.011) \\ 
%  DB weighting & 0.020  ( 0.039) & 0.001  ( 0.003) & 0.659  ( 0.329) & 0.002  ( 0.005) & 0.634  ( 0.391) & 0.002  ( 0.005) \\ 
%  Outliergram & 0.879  ( 0.137) & 0.011  ( 0.012) & 0.984  ( 0.043) & 0.008  ( 0.011) & 0.999  ( 0.013) & 0.008  ( 0.009) \\ 
  Adj. Ourliergram & 0.616  ( 0.220) & 0.003  ( 0.007) & 0.969  ( 0.099) & 0.006  ( 0.008) & 0.996  ( 0.019) & 0.007  ( 0.010) \\ 
  Mah. RKHS & 0.295  ( 0.122) & 0.013  ( 0.011) & 0.988  ( 0.052) & 0.008  ( 0.009) & 0.941  ( 0.167) & 0.007  ( 0.009) \\ 
  \hline
  c= 0.2 & Model  1 &  & Model  2 &  & Model  3 &  \\ \hline
  Method & $p_c$ & $p_f$ & $p_c$ & $p_f$ & $p_c$ & $p_f$ \\ 
  \hline
%  Fun. BP & 0.060  ( 0.090) & 0.000  ( 0.001) & 0.098  ( 0.104) & 0.000  ( 0.001) & 0.102  ( 0.094) & 0.000  ( 0.000) \\ 
  Adj. Fun. BP & 0.376  ( 0.226) & 0.003  ( 0.006) & 0.509  ( 0.205) & 0.005  ( 0.009) & 0.540  ( 0.227) & 0.003  ( 0.006) \\ 
%  Fun. HDR BP & 0.034  ( 0.024) & 0.004  ( 0.006) & 0.047  ( 0.012) & 0.001  ( 0.003) & 0.036  ( 0.022) & 0.003  ( 0.006) \\ 
  Rob. Mah. Dist. & 0.866  ( 0.167) & 0.000  ( 0.002) & 0.304  ( 0.171) & 0.002  ( 0.005) & 0.111  ( 0.118) & 0.004  ( 0.007) \\ 
  ISE & 0.513  ( 0.396) & 0.031  ( 0.023) & 0.997  ( 0.018) & 0.047  ( 0.031) & 0.999  ( 0.010) & 0.028  ( 0.023) \\ 
  DB trimming & 0.235  ( 0.314) & 0.009  ( 0.013) & 0.990  ( 0.037) & 0.015  ( 0.014) & 0.979  ( 0.121) & 0.011  ( 0.011) \\ 
%  DB weighting & 0.015  ( 0.025) & 0.001  ( 0.004) & 0.216  ( 0.228) & 0.001  ( 0.003) & 0.111  ( 0.179) & 0.000  ( 0.002) \\ 
%  Outliergram & 0.356  ( 0.202) & 0.002  ( 0.005) & 0.894  ( 0.158) & 0.001  ( 0.004) & 0.959  ( 0.146) & 0.001  ( 0.004) \\ 
  Adj. Ourliergram & 0.248  ( 0.179) & 0.001  ( 0.003) & 0.959  ( 0.074) & 0.004  ( 0.008) & 0.999  ( 0.007) & 0.008  ( 0.011) \\ 
  Mah. RKHS & 0.141  ( 0.089) & 0.012  ( 0.011) & 0.945  ( 0.127) & 0.005  ( 0.007) & 0.749  ( 0.232) & 0.006  ( 0.009) \\ 
   \hline\hline
\end{tabular}}
\caption{Ratio of correct and false detected outliers.} 
\label{Table:Outs}
\end{table}

\

\noindent\textit{Boxplots}

As a part of the exploratory analysis of the data, we include the functional boxplots of two real data sets used also in \cite{arribas2014}.
\begin{itemize}
\item Male mortality rates in Australia 1901-2003: this data set can be found in  the  R package ``fds''.  It  contains Australia male log mortality rates between 1901 and 2003, provided by The Australian Demographic Data Bank.
\item Berkeley growth: this data set is available in  the  R package ``fda''.  It  contains height measures of 54 girls and 39 boys, under the age of 18, at 31 fixed points.
\end{itemize}
In  \cite{arribas2014}  the authors suggest to smooth the data, since the curves in the first set are very irregular. However, the distance proposed here has an intrinsic smoothing procedure, so we work directly with the original curves. 

We use the proposed Mahalanobis distance to define a depth measure  by   $(1+M_\alpha^2(x,m))^{-1}$, for a realization $x$ of the process. Using this depth, we mark as the functional median the deepest curve of the set. The  central  band of the boxplot is built as the envelope of the 50\% deepest curves, and the  ``whiskers'' are constructed   as the envelop of all the curves that are not marked as outliers. In order to detect the outliers we use the same procedure as before. However, the sample sizes now are too small to estimate robustly the covariance matrix over the grid,  then we use the usual empirical covariance matrix.

The curves marked as outliers for the male mortality set are years 1919 (influenza epidemic) and 1999-2003, which are among the curves detected using other different proposals in \cite{arribas2014}. The resulting boxplot for this data set can be found in Figure \ref{Fig:MortalityBox}, where the outliers are plotted in red. This figure includes also (on the left) a graphic representation of the depths: from green, the deepest curves, to ochre, the outer ones. 

\begin{figure}
    \centering
    \begin{subfigure}[b]{0.47\textwidth}
        \includegraphics[width=\textwidth]{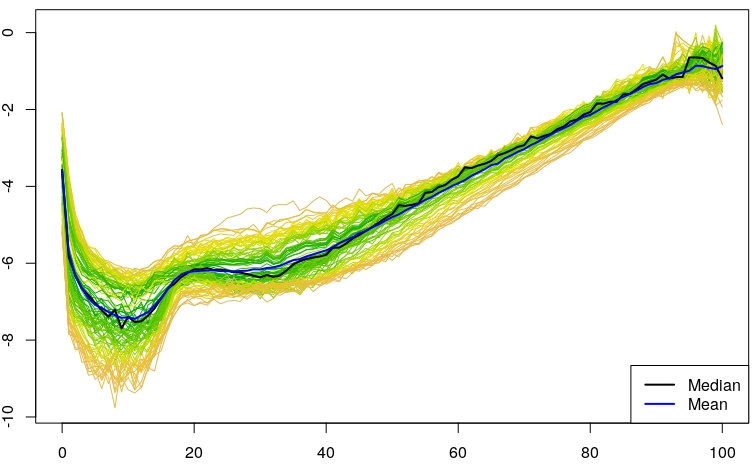}
        \caption{Depths of the curves}
    \end{subfigure}
    \hfill 
    \begin{subfigure}[b]{0.47\textwidth}
        \includegraphics[width=\textwidth]{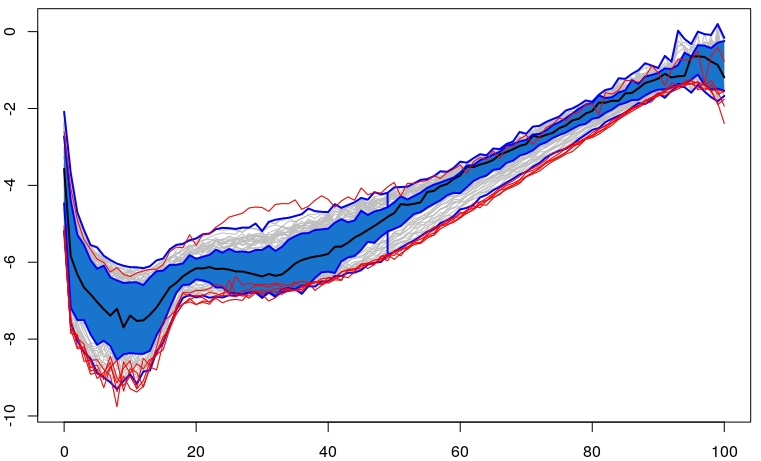}
        \caption{Boxplot and outliers}
        \label{Fig:MortalityBox}
    \end{subfigure}
    \caption{Male mortality rates in Australia 1901-2003}
    \label{Fig:Mortality}
\end{figure}

The boxplots for the Berkely growth sets, female and male, are shown in Figure \ref{Fig:Growth}. The distributions of the distances in this case are far from the theoretical distribution derived for Gaussian processes. In an attempt to overcome this problem, the parameter $\alpha$ is adjusted automatically in order to reduce the Kullback-Leibler divergence between both distributions. The selected values of $\alpha$ with this procedure are $0.089$ for the female set and $0.1$ for the male set. In any case, the number of outliers detected is quite large when compared to the sample size.

Female: 3, 8, 10, 13, 15, 18, 26, 29, 42, 43, 48 and 53.
%"girl03" "girl08" "girl10" "girl13" "girl15" "girl18" "girl26" "girl29" "girl42" "girl43" "girl48" "girl53"

Male: 5, 10, 15, 27, 29, 32, 35 and 37.
%"boy05" "boy10" "boy15" "boy27" "boy29" "boy32" "boy35" "boy37"

But if we look at the  estimated density functions corresponding to the distribution of  $M^2_\alpha$ on  each set (Figure \ref{Fig:Growth_Density}), we can see that these distributions have two modes. In fact, all the curves marked as outliers are the ones that fall into the second mode (whose distance to the mean is greater that the red dotted line). This behavior is similar to the one of the Integrated Squared Error showed in \cite{arribas2014}.

\begin{figure}
    \centering
    \begin{subfigure}[b]{0.47\textwidth}
        \includegraphics[width=\textwidth]{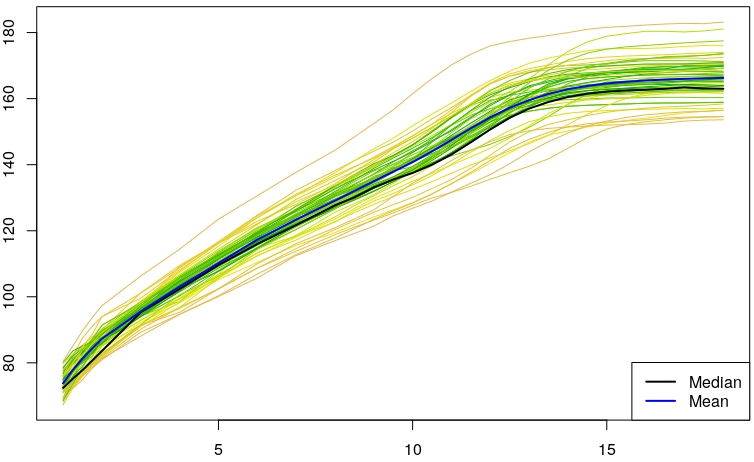}
        \caption{Depths of the curves (female)}
    \end{subfigure}
    \hfill 
    \begin{subfigure}[b]{0.47\textwidth}
        \includegraphics[width=\textwidth]{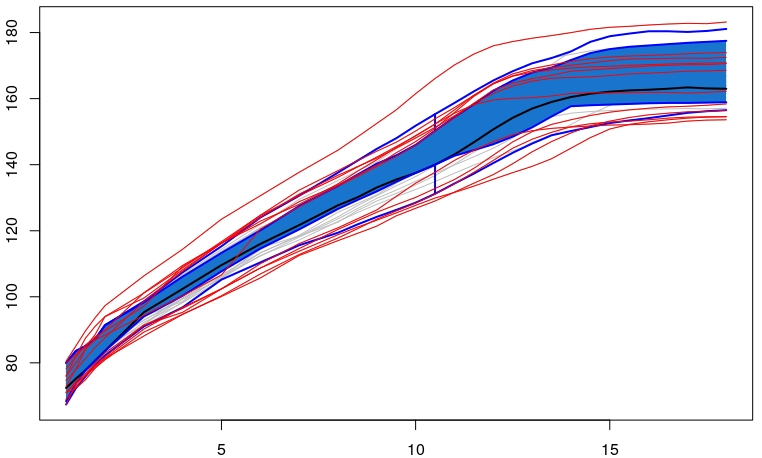}
        \caption{Boxplot and outliers (female)}
    \end{subfigure}
    \begin{subfigure}[b]{0.47\textwidth}
        \includegraphics[width=\textwidth]{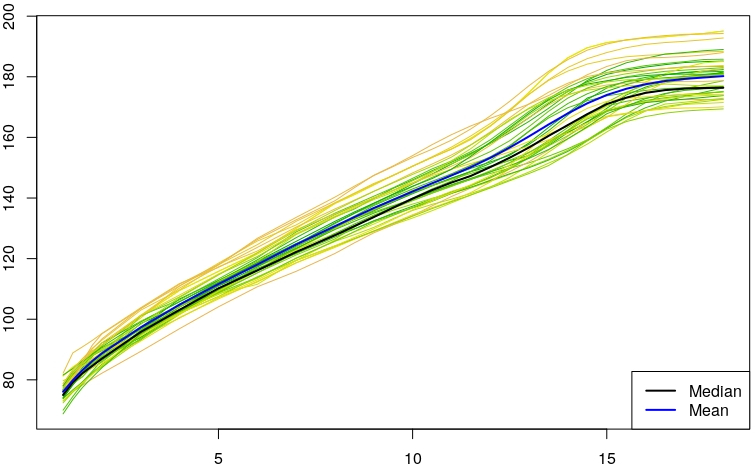}
        \caption{Depths of the curves (male)}
    \end{subfigure}
    \hfill 
    \begin{subfigure}[b]{0.47\textwidth}
        \includegraphics[width=\textwidth]{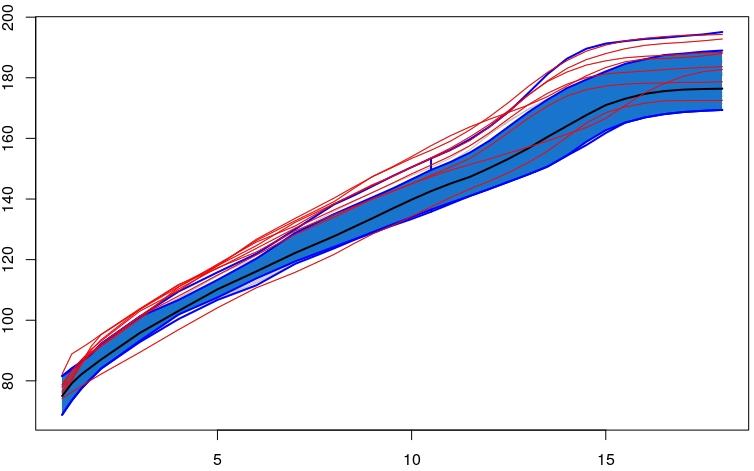}
        \caption{Boxplot and outliers (male)}
    \end{subfigure}
    \caption{Berkeley growth}\label{Fig:Growth}
\end{figure}

\begin{figure}
    \centering
    \begin{subfigure}[b]{0.47\textwidth}
        \includegraphics[width=\textwidth]{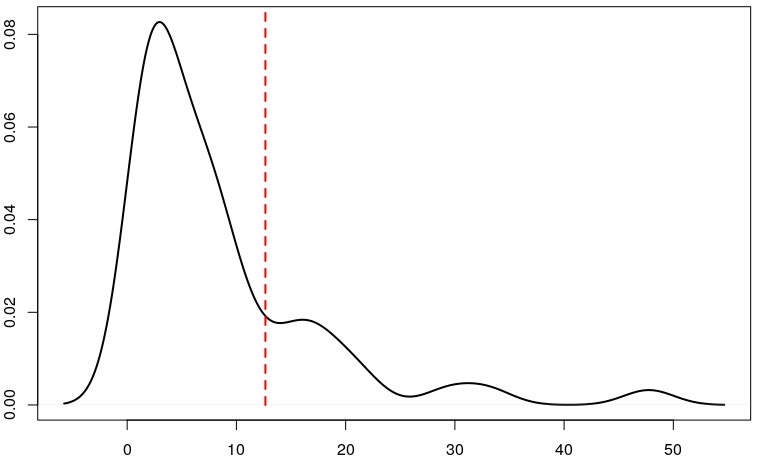}
        \caption{Female}
    \end{subfigure}
    \hfill 
    \begin{subfigure}[b]{0.47\textwidth}
        \includegraphics[width=\textwidth]{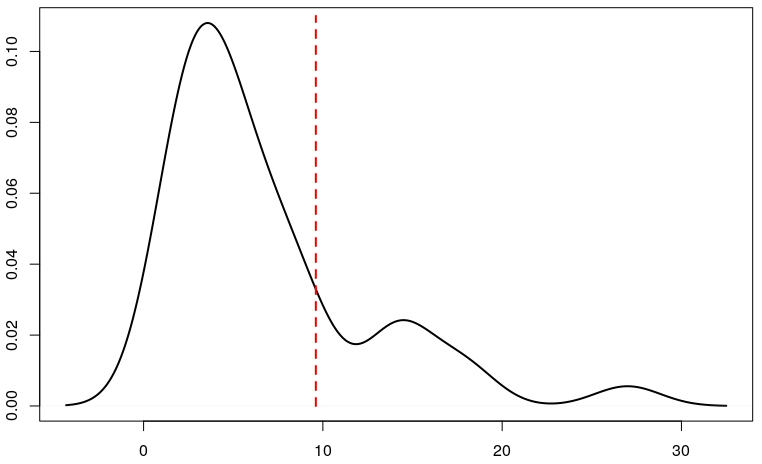}
        \caption{Male}
    \end{subfigure}
    \caption{ Estimated density functions of the distributions of  $M_\alpha^2$   for Berkeley growth.}
	\label{Fig:Growth_Density}
\end{figure}

%%%%%%%%%%%%%%%%%%%%%%%%%%%%%%%%%%%%%%%%%%%%%%%%%%%%%%%%%%%%%%%%%%%%%%%
%%%%%%%%%% 			Binary classification		   %%%%%%%%%%%%%%%%
%%%%%%%%%%%%%%%%%%%%%%%%%%%%%%%%%%%%%%%%%%%%%%%%%%%%%%%%%%%%%%%%%%%%%%%

\subsection{Binary classification}
Mahalanobis distance can be used also for classification, classifying each curve through
the distance to the nearest mean function, whenever the  prior  probabilities $\pi_1,\ldots,\pi_k$ of the classes are equal. When  this  is not the case, the rule to classify a coming observation $x$ is just to assign it to the population $j$ defined by 
$$M_\alpha^2(x,m_j)-2\log \pi_j=\min_{1\leq i\leq k}\left(M_\alpha^2(x,m_i)-2\log \pi_i\right),$$ 
where $m_j$ stands for the mean functions  (for instance, it is used in \cite[Section 3.3]{galeano2015}). Here we present two different  examples  of binary classification with same prior probabilities. In order to check the performance of our proposal, we compare it with other classifiers presented below. The name used on the tables for each method is shown between brackets.

\begin{itemize}
\item Optimal Bayes classifier proposed in \cite{dai2017} (``OB''). This is a functional extension of the classical multivariate Bayes classifier based on  nonparametric estimators of the density functions corresponding to the main coefficients in Karhunen-Lo\`eve expansions. Here the curves are projected onto a common sequence of eigenfunctions, and the previous quotient is taken using the densities of these projections. The authors propose three approaches to estimate these densities. We have chosen the implementation which assumes that these densities are Gaussian since, according to their results, it seems to slightly outperform the others. The number of eigenfunctions used for the projections is fixed by cross-validation.
\item  Mahalanobis-based  semidistance of Equation \eqref{eq:def_galeano}  proposed in \cite{galeano2015} (``$d_{FM}^k$'').
\item k-nearest neighbours with $3$ and $5$ neighbours (``knn3'' and ``knn5''). In spite of its simplicity, this method  tends to show a good performance  when dealing with functional data.
\end{itemize}

Our proposal is denoted as ``$M_\alpha$''. Now the parameter $\alpha$ is fixed by cross-validation, for $\alpha\in [10^{-4},10^{-1}]$. For heterocedastic problems, we have implemented our binary classifier mimicking an improvement that is usually made in the multivariate context. In that finite setting,  given two equiprobable populations with covariance matrices $\Sigma_0, \Sigma_1$,  a curve $x$ is assigned  to class 1, according to the Quadratic Discriminant classifier, whenever 
$$M^2(x,x_0)-M^2(x,x_1) > \log \frac{|\Sigma_1|}{|\Sigma_0|},$$ 
where the finite dimensional Mahalanobis distance $M$ is defined in \eqref{eq:MahMult} (see, for instance, Section 8.3.7 of \cite{izenman2008}). In most cases  with multivariate data, classifying with this rule gives better results  than merely classifying to class 1 when $M^2(x,x_0)>M^2(x,x_1)$.  In the case of functional data this is just   an heuristic improvement. If $m_0, K_0$ and $m_1, K_1$ are the mean and covariance functions of each class, the standard classifier  would assign  the curve $x$ to the class such that $M_{\alpha,K_i}^2(x,m_i)$, $i=0,1$, is minimum  ($M_{\alpha,K_i}$ stands for the distance $M_\alpha$ when using the covariance function $K_i$). Instead,  we will classify $x$ to class 1 if $M_{\alpha,K_0}^2(x,m_0)-M_{\alpha,K_1}^2(x,m_1)>C$, and to 0 if not. This constant $C$ is computed as $\log( (\lambda_1^1\cdot\ldots\cdot\lambda_{10}^1)/(\lambda_1^0\cdot \ldots\cdot\lambda_{10}^0))$, where $\lambda_j^0, \lambda_j^1$, $j=1,\ldots,10$, are the ten greater eigenvalues of $\Gamma_{K_0}$ and $\Gamma_{K_1}$ respectively.

\

\noindent\textit{Cut Brownian Motion and Brownian Bridge}

The first problem under consideration is to distinguish between two ``cut'' versions of a standard Brownian Motion and a Brownian Bridge. By ``cut'' we mean to take the process $X(t)$ on the interval $t\in [0,T]$, $T<1$. We know  an explicit expression for  the Bayes error of this problem, which depends on the cut point $T$. For the case of equal prior probabilities of the classes, which will be the case here,  this Bayes error is  given by,
$$L^* \ = \ \frac{1}{2} - \Phi\left(\frac{(-(1-T)\log(1-T))^{1/2}}{(T(1-T))^{1/2}}\right) + \Phi\left(\frac{(-(1-T)\log(1-T))^{1/2}}{T^{1/2}}\right),$$
where $\Phi$ stands for the distribution function of a standard Gaussian random variable. Since both processes are almost indistinguishable around zero, $L^*\to 0.5$ when $T\to 0$. Also $L^*\to 0$ when $T\to 1$, since then  one can decide the class with no error  just looking at the last point of the curve.

The trajectories of both processes are shown in Figure \ref{Fig:TrayBrowns} and the cut points considered, $ 0.75, 0.8125, 0.875, 0.9375$ and $1$, are marked with vertical dotted lines. For each class, 50 samples are drawn for  training   and 250 for test. The experiment is run 500 times for each cut point, and the trajectories are sampled over an equidistant grid in $[0,1]$ of size $50$.  Table \ref{Table:Bridge} shows  the percentages of misclassified curves, as well as the Bayes errors. Our proposal and knn with 5 neighbors seem to outperform the other methods for this problem.

\begin{figure}
\centering
\includegraphics[scale=0.5]{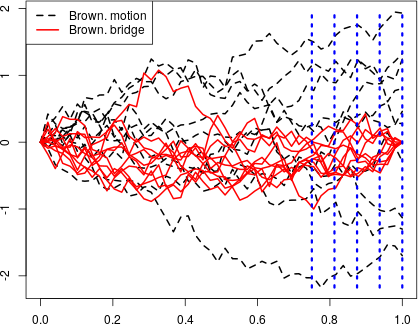}
\caption{Trajectories of Brownian Motion and Bridge with cut points (vertical)}
\label{Fig:TrayBrowns}
\end{figure}

% latex table generated in R 3.3.2 by xtable 1.8-2 package
% Tue Sep 12 15:44:52 2017
\begin{table}[ht]
\centering
\resizebox{\textwidth}{!}{
\begin{tabular}{lcccccc}
  \hline\hline
t & Bayes & $M_\alpha$ & OB & $d_{FM}^k$ & knn3 & knn5 \\ 
  \hline
0.75 & 33.9 & 42.5  ( 3.5) & 43.5  ( 2.5) & 46.4  ( 3.2) & 43.2  ( 2.8) & \textbf{42.4}  ( 2.8) \\ 
  0.8125 & 30.8 & \textbf{40.0}  ( 3.7) & 41.9  ( 2.6) & 44.8  ( 3.3) & 41.0  ( 2.8) & 40.1  ( 3.0) \\ 
  0.875 & 26.9 & \textbf{36.1}  ( 3.6) & 40.2  ( 2.6) & 42.6  ( 3.7) & 38.0  ( 3.0) & 36.9  ( 3.0) \\ 
  0.9375 & 20.9 & \textbf{32.3}  ( 3.1) & 38.0  ( 2.8) & 39.9  ( 3.5) & 33.7  ( 2.7) & 32.5  ( 2.7) \\ 
  1 & 0.0 & \textbf{26.5}  ( 2.8) & 35.9  ( 2.9) & 36.0  ( 3.5) & 28.4  ( 2.7) & 27.6  ( 2.7) \\ 
   \hline\hline
\end{tabular}}
\caption{Percentage of misclassification for cut Brownian Motion and Brownian Bridge.} 
\label{Table:Bridge}
\end{table}

\

\noindent\textit{Simulated data}

We have implemented also the experimental setting proposed in \cite{dai2017}. The authors consider three different scenarios. For the first two, the curves of both classes $X^{(0)}$ and $X^{(1)}$, are drawn from processes
$$X^{(i)}(t) \ = \ \mu_i(t) + \sum_{j=1}^{50} A_{j,i}\phi_j(t) + \varepsilon, \ \ i=0,1,$$
where $\varepsilon$ is a Gaussian variable with zero mean and variance $0.01$. Function $\phi_j$ is the $j$th element in the Fourier basis, starting with,
$$\phi_1(t) = 1, \ \ \phi_2(t) = \sqrt{2}\cos(2\pi t), \ \ \phi_3(t) = \sqrt{2}\sin(2\pi t).$$
For Scenario A, the coefficients $A_{j,0}, A_{j,1}$ are independent Gaussian variables. For Scenario B they are  independent centered exponential random variables.  Finally, in Scenario C the processes are 
$$X^{(i)}(t) \ = \ \mu_i(t) + \sum_{j=1}^{50} \frac{A_{j,i}}{B_i}\phi_j(t), \ i=0,1,$$
where $A_{j,0},A_{j,1}$ are the same as in Scenario B and $B_0, B_1$ are independent variables with common distribution $\chi_{30}^2/30$. Thus, in this  latter  case the coefficients of the basis expansion are dependent but uncorrelated. The  means  and the variances   of the coefficients $A_{j,i}$, $i=0,1$, are changed in order to check  the ``same'' and ``different'' scenarios for mean and variances.   Then $m_0(t)=0$ always, and  $m_1(t)$ is either 0 or $t$. In the same way, the variance of $A_{j,0}$ is always $\text{exp}(-j/3)$ and the variance of $A_{j,1}$ is  either $\text{exp}(-j/3)$, or $\text{exp}(-j/2)$. The curves are sampled on 51 equidistant points in $[0,1]$.

The prior probabilities of both classes are set to $0.5$ and two sample sizes,  50 and 100, are tested for  training.   For test we use 500 realizations of the processes. Each experiment is repeated 500 times. The misclassification percentages for all the different scenarios are shown in Table \ref{Table:Basis}. Our proposal is mainly the winner, although in Scenario A it is overtaken by the Optimal Bayes classifier   in the case of equal means and different variances. Also knn with 5 neighbors performs better sometimes  in the case of different means and equal variances.

% latex table generated in R 3.3.2 by xtable 1.8-2 package
% Tue Sep 12 03:03:58 2017
\begin{table}[ht]
\centering
\resizebox{\textwidth}{!}{
\begin{tabular}{lllccccc}
  \hline \hline
  \multicolumn{8}{c}{Scenario A (Gaussian)}\\ 
  n & mean & sd & $M_\alpha$ & OB & $d_{FM}^k$ & knn3 & knn5 \\ 
  \hline
50 & same & diff & 35.9  ( 3.5) & \textbf{19.0}  ( 4.0) & 47.0  ( 3.1) & 45.6  ( 2.2) & 46.2  ( 2.0) \\ 
   & diff & same & 42.3  ( 3.8) & 47.3  ( 6.8) & 43.7  ( 3.7) & 42.9  ( 3.6) & \textbf{42.0}  ( 3.6) \\ 
   & diff & diff & \textbf{29.1}  ( 5.0) & 36.4  ( 10.1) & 40.0  ( 5.4) & 39.7  ( 3.0) & 40.0  ( 3.1) \\ 
  100 & same & diff & 34.2  ( 3.0) & \textbf{9.3}  ( 2.1) & 45.8  ( 3.5) & 44.6  ( 1.9) & 45.4  ( 1.8) \\ 
   & diff & same & \textbf{34.6}  ( 4.5) & 45.1  ( 8.2) & 37.0  ( 4.4) & 42.1  ( 3.0) & 41.0  ( 3.0) \\ 
   & diff & diff & \textbf{22.0}  ( 4.9) & 35.7  ( 11.3) & 34.2  ( 6.2) & 38.3  ( 2.4) & 38.6  ( 2.5) \\ \hline 
  \multicolumn{8}{c}{Scenario B (exponential)}\\ 
  n & mean & sd & $M_\alpha$ & OB & $d_{FM}^k$ & knn3 & knn5 \\ 
  \hline
  50 & same & diff & \textbf{24.2}  ( 5.2) & 30.2  ( 10.4) & 37.0  ( 6.6) & 37.6  ( 2.6) & 38.0  ( 2.7) \\ 
   & diff & same & 41.8  ( 3.9) & 49.1  ( 5.5) & 42.3  ( 4.1) & 38.0  ( 3.4) & \textbf{37.2}  ( 3.6) \\ 
   & diff & diff & \textbf{14.3}  ( 4.8) & 31.8  ( 12.8) & 25.1  ( 9.0) & 24.7  ( 3.1) & 25.1  ( 3.5) \\ 
  100 & same & diff & \textbf{16.9}  ( 3.1) & 24.0  ( 9.6) & 28.2  ( 6.1) & 35.3  ( 2.4) & 35.7  ( 2.3) \\ 
   & diff & same & \textbf{34.5}  ( 4.6) & 48.3  ( 5.9) & 36.7  ( 4.2) & 36.5  ( 2.8) & 35.6  ( 2.7) \\ 
   & diff & diff & \textbf{7.7}  ( 2.9) & 30.1  ( 13.4) & 17.8  ( 6.3) & 21.6  ( 2.4) & 21.8  ( 2.6) \\ \hline 
  \multicolumn{8}{c}{Scenario C (dependent)}\\ 
  n & mean & sd & $M_\alpha$ & OB & $d_{FM}^k$ & knn3 & knn5 \\ 
  \hline
  50 & same & diff & \textbf{30.0}  ( 5.4) & 33.3  ( 8.1) & 40.1  ( 5.9) & 39.9  ( 2.7) & 39.9  ( 2.7) \\ 
   & diff & same & 43.6  ( 4.1) & 48.8  ( 4.8) & 42.9  ( 4.2) & 38.1  ( 3.6) & \textbf{37.5}  ( 3.8) \\ 
   & diff & diff & \textbf{19.9}  ( 4.9) & 36.2  ( 11.0) & 30.3  ( 7.7) & 26.4  ( 3.1) & 26.6  ( 3.3) \\ 
  100 & same & diff & \textbf{21.7}  ( 3.0) & 28.0  ( 7.5) & 29.4  ( 5.7) & 37.6  ( 2.4) & 37.5  ( 2.4) \\ 
   & diff & same & 38.0  ( 4.3) & 48.8  ( 5.0) & 38.9  ( 3.8) & 36.5  ( 2.7) & \textbf{35.6}  ( 2.8) \\ 
   & diff & diff & \textbf{13.3}  ( 3.2) & 34.6  ( 11.0) & 23.2  ( 6.1) & 23.4  ( 2.4) & 23.3  ( 2.4) \\ 
   \hline \hline
\end{tabular}}
\caption{Percentage of misclassification for the experimental setting of \cite{dai2017}.} 
\label{Table:Basis}
\end{table}

\section*{Acknowledgements}
This work has been partially supported by Spanish Grant MTM2016-78751-P. The authors are very grateful to Daniel Est\'evez for his valuable help with operator theory.

%%%%%%%%%%%%%%%%%%%%%%%%%%%%%%%%%%%%%%%%%%%%%%%%%%%%%%%%%%%%%%%%%%%%%%%
%%%%%%%%%%%%%%%%%%%%%%%%%%%%%%%%%%%%%%%%%%%%%%%%%%%%%%%%%%%%%%%%%%%%%%%
%%%%%%%%%% 				Conclusions		   %%%%%%%%%%%%%%%%
%%%%%%%%%%%%%%%%%%%%%%%%%%%%%%%%%%%%%%%%%%%%%%%%%%%%%%%%%%%%%%%%%%%%%%%
%%%%%%%%%%%%%%%%%%%%%%%%%%%%%%%%%%%%%%%%%%%%%%%%%%%%%%%%%%%%%%%%%%%%%%%

%\section{Conclusions}
%\red{Something...}

\bibliography{Refs}{}
\end{document}